\newtheorem{theorem}{Theorem}[section]
\newtheorem{lemma}[theorem]{Lemma}
\newtheorem{proposition}[theorem]{Proposition}
\def\[{\begin{eqnarray}}
\def\]{\end{eqnarray}}
\renewcommand{\L}{\mathcal{L}}
\newcommand{\M}{\mathcal{M}}
\newcommand{\Z}{\mathbb{Z}}
\renewcommand{\d}{\partial}
\newcommand{\N}{\mathbb{N}}
\begin{document}

\title[constrained discrete multi-KP]{Virasoro  symmetry of the  constrained multi-component KP hierarchy and its integrable discretization}
\author{
Chuanzhong Li\dag,\  \ Jingsong He\ddag}
\allowdisplaybreaks
\dedicatory {\small Department of Mathematics,  Ningbo University, Ningbo, 315211, China\\
\dag email:lichuanzhong@nbu.edu.cn\\
\ddag email:hejingsong@nbu.edu.cn}
\thanks{}
\date{}

\begin{abstract}
In this paper, we construct the Virasoro type additional symmetries of a kind of
constrained multi-component KP  hierarchy and give the Virasoro flow equation on eigenfunctions and adjoint eigenfunctions. It can also be seen that the
algebraic structure of  the Virasoro symmetry is kept after discretization from the constrained multi-component KP hierarchy to the discrete constrained multi-component KP hierarchy.
\end{abstract}


\maketitle
\noindent Mathematics Subject Classifications(2000):  37K05, 37K10, 37K40.\\
Keywords: {constrained multi-component KP hierarchy,  discrete constrained multi-component KP hierarchy,  Virasoro symmetry} \\

\tableofcontents

\allowdisplaybreaks
\section{ Introduction}

The KP hierarchy is one of the most important nonlinear integrable system in mathematics and physics. The additional symmetry of the KP hierarchy was
constructed by L. A. Dickey in \cite{dicheymdern1} in which the additional symmetry constitutes a famous algebra called Virasoro algebra. The KP system becomes one of the most fundamental integrable models in mathematical physics as Toda system.
As one of the most important sub-hierarchies of the KP hierarchy by considering different
reduction conditions on the Lax operator, the
constrained KP hierarchy contains a large number of interesting
 soliton equations under the
so-called symmetry constraint\cite{kss,chengyiliyishenpla1991,
chengyiliyishenjmp1995} which means that the negative part of the Lax operator of the
constrained KP hierarchy
is a generator of the additional symmetries of the KP hierarchy.
 However, the
additional symmetry flows of the KP hierarchy are not consistent with the
special form of the  Lax operator for the constrained KP hierarchy.
Therefore it is highly non-trivial  to find  a suitable  additional
symmetry  flows on eigenfunctions and adjoint eigenfunctions for this sub-hierarchy\cite{aratyn1997} to make the constraint compatible with its additional symmetry.
In addition to the above-mentioned
 integrable systems, other KP type and Toda type systems  also have interesting  algebraic structure
in additional symmetries\cite{ourBlock,dtyptds,ghostdKP,maohuajmp}.

The multi-component KP hierarchy is important matrix-formed generalization of the original KP hierarchy and its additional symmetry was well-studied in \cite{dicheymdern}. In \cite{kacvandeleur},  the representation related to the multi-component KP hierarchy was proposed. In a certain sense, the multi-component KP hierarchy served as a universal integrable hierarchy because many well known integrable hierarchies come from certain reductions of it, for example, like the Davey-Stewartson system, the two-dimensional Toda lattice, the $n$-wave
system and the Darboux-Egoroff system.
The Hirota bilinear equations of the multi-component KP hierarchy play an interesting role in the theory
of theta functions of Riemann surfaces. To be more specific, it shows that the theta function
of a Riemann surfaces gives a tau function for the $n$-component KP hierarchy in \cite{Maffei}. It also show that such a theta function satisfies the Fay trisecant formula.
There is a new interesting phenomenon in the $n$-component case which does not occur in the single-component
case: the tau function and the wave function are a collection of functions
parameterized by the elements of the root lattice of type A.  Later with reduction, the special but quite natural
constrained multi-component KP hierarchy was studied by Y. Zhang in \cite{zhangJPA}. But for the additional symmetry of this reduced hierarchy it is still in blank and interesting. That becomes one motivation of us to do this work. Recently the research on random matrices and non-intersecting Brownian motions and the study of moment matrices with regard to several weights showed that the determinants of such moment matrices these
determinants are tau-functions of the multi-component KP-hierarchy \cite{randommatrixCMP}. Basing on above mentioned importance of the multi-component KP-hierarchy, it is quite necessary to go on further study on this hierarchy and its reduced hierarchy.

Besides above continuous dynamical systems, discrete integrable systems becomes more and more important. For multicomponent discrete integrable systems, the multicomponent Toda hierarchy recently attracts a lot of valuable research such as \cite{manas,EMTH,EZTH}.
The Hamiltonian structures and
 tau function for the discrete KP hierarchy
was introduced in (\cite{Kupershimidt}-\cite{Iliev}). Later the additional symmtries and ghost symmetry for the discrete KP hierarchy was considered in \cite{ghostdKP,LiuS2} respectively.
   In our recent work \cite{maohuajmp}
 the additional symmetries of one-component constrained
discrete KP hierarchy was constructed in which a suitable  additional
symmetry  flows on eigenfunctions and adjoint eigenfunctions for this hierarchy was given.  In this paper, we also consider the discrete constrained multi-component KP hierarchy and identify its algebraic structure hidden behind this discrete hierarchy.

Our main purpose of this article is to identify the Virasoro symmetry of the constrained multi-component KP hierarchy hierarchy and give the complete Virasoro
flows on eigenfunctions and adjoint eigenfunctions in the Lax operator of the hierarchy which form the positive half of Virasoro algebra using the technique in \cite{aratyn1997,maohuajmp}. Also we give the Virasoro symmetry of the discrete constrained multi-component KP hierarchy. That shows  that the
algebraic structure of  the Virasoro symmetry is kept after discretization from the constrained multi-component KP hierarchy to the discrete constrained multi-component KP hierarchy.

This paper is organized as follows. We give a brief description of the
constrained $N$-component  KP hierarchy in Section 2. The main results are stated and proved in Section 3,
which concerns the Virasoro symmetry of the multi-component constrained KP hierarchy.  Section 4 and Section 5 will concerns the Virasoro symmetry of the discrete multi-component constrained KP hierarchy. Section 6 is devoted
to conclusions and discussions.



\section{The constrained multi-component KP hierarchy}

For an $N$-component KP hierarchy, there are $N$ infinite families of time variables $t_{\alpha n}, \alpha=1,\ldots,N, n=1,2,\ldots$. The coefficients $u_1, u_2,\ldots$ of the Lax operator
 \[L=E\partial +u_1\partial^{-1}+u_2\partial^{-2}+\ldots\]
 are $N\times N$ matrices and $\d$ is the derivative about spatial variable $x$.  There are another $N$ pseudo-differential operators $R_1,\ldots, R_N$ in the matrix form
$$R_{\alpha}=E_{\alpha}+u_{\alpha 1}\partial^{-1}+u_{\alpha 2}\partial^{-2}+\ldots,$$ where $E_{\alpha}$ is the $N\times N$ matrix with $1$ on the $(\alpha,\alpha)$-component and zero elsewhere, and $u_{\alpha 1}, u_{\alpha 2},\ldots$ are also $N\times N$ matrices. The operators $L, R_1,\ldots, R_N$ satisfy the following conditions:
$$LR_{\alpha}=R_{\alpha} L, \quad R_{\alpha}R_{\beta}=\delta_{\alpha\beta}R_{\alpha}, \quad \sum_{\alpha=1}^N R_{\alpha}= E.$$

 The Lax equations of the $N$-component KP hierarchy are:
\begin{equation*}
\frac{\partial L}{\partial t_{\alpha, n}}=[B_{\alpha, n}, L],\hspace{1cm}\frac{\partial R_{\beta}}{\partial t_{\alpha, n}}=[B_{\alpha, n}, R_{\beta}],\hspace{1cm}B_{\alpha, n}:=(L^n R_{\alpha})_+.
\end{equation*}
The  $``+"$ means the projection onto the terms with the nonnegative powers about $\d$.
One can check that the operator $\partial$ now is equal to $\partial_{t_{11}}+\ldots +\partial_{t_{N1}}$.
In fact the Lax operator $L$ and $R_{\alpha}$ can have the following dressing structures
\[L=S\d S^{-1}, \ \ R_{\alpha}=SE_{\alpha}S^{-1},\]
where
\[S=E+\sum_{i=1}^{\infty}S_i\d^{-i},\]
and $S_1$ will be used to define eigenfunction $P^{(r)}$ and adjoint eigenfunction $Q^{(r)}$ later in a constrained hierarchy.

As \cite{zhangJPA}, the following reduction condition is imposed onto the $N$-component KP hierarchy:
\[\label{constr}\sum_{l=1}^rd_l^s(LR_l)^s+\sum_{l=r+1}^NLR_l= \sum_{l=1}^rd_l^s(LR_l)_+^s+\sum_{l=r+1}^N(LR_l)_+,\ \ 1\leq r\leq N.\]

Let us denote following operators as \cite{zhangJPA}
\[\hat \d:=\d_1+\d_1+\dots+\d_r,\ \ \d_i=\frac{\d}{\d t_{i1}},\]
\[\hat L^{(i)}=S^{(r)}(\hat \d)E_{i}^{(r)}\hat \d (S^{(r)}(\hat \d))^{-1},\ \
\hat L=\sum_{i=1}^rd_i\hat L^{(i)},\]
and
\[B_k^{(l,r)}&=&(S^{(r)}(\hat \d)E_l^{(r)}\hat \d^k (S^{(r)}(\hat \d))^{-1})_+:=(C_k^{(l,r)})_+,\ \ 1\leq l\leq r,\\
C_k^{(l,r)}&=&S^{(r)}(\hat \d)E_l^{(r)}\hat \d^k (S^{(r)}(\hat \d))^{-1},\ \ 1\leq l\leq r,\]
\[P^{(r)}=(S_1)_{ij}|_{1\leq i\leq r,r+1\leq j\leq N},\ \ Q^{(r)}=(S_1)_{ij}|_{1\leq j\leq r,r+1\leq i\leq N},\]
where $S^{(r)}(E_{i}^{(r)})$ is the $r\times r$ principal sub-matrix of $S(E_{i})$ and  the projection $``+"$ here is about the operator $\hat \d.$
This further lead to
\[\L=\hat L^s=\sum_{l=1}^rd_l^sB_s^{(l,r)}+P^{(r)}\hat \d^{-1}Q^{(r)}.\]
Under the constraint eq.\eqref{constr}, the following evolutionary equations of the constrained $N$-component KP hierarchy can be derived \cite{zhangJPA}.

\begin{proposition}

For the constrained $N$-component KP hierarchy, the following hierarchies of equations can be derived \cite{zhangJPA}

\[\frac{\d P^{(r)}}{\d t_{nl}}=B_n^{(l,r)}P^{(r)},\ \ \frac{\d Q^{(r)}}{\d t_{nl}}=-(B_n^{(l,r)})^*Q^{(r)},\]
\[\frac{\d \hat L^s}{\d t_{nl}}=[B_n^{(l,r)},\hat L^s],\]
where for $B_n^{(l,r)}=\sum_{i=0}^nB_i\hat \d^i,$
\[(B_n^{(l,r)})^*Q^{(r)}:=\sum_{i=0}^n(-1)^i\hat \d^i(Q^{(r)}B_i).\]
\end{proposition}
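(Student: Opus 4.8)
The plan is to take the Lax equation $\d_{t_{nl}}\hat L^s=[B_n^{(l,r)},\hat L^s]$ as the structural backbone and then to read the eigenfunction flows off its strictly negative part. First I would establish the Lax equation from the dressing data. Observe that $C_n^{(l,r)}=S^{(r)}(\hat\d)E_l^{(r)}\hat\d^n(S^{(r)}(\hat\d))^{-1}=(\hat L^{(l)})^n$, so $C_n^{(l,r)}$ commutes with every $\hat L^{(i)}$ because the $E_i^{(r)}$ are orthogonal idempotents commuting with the powers of $\hat\d$. Feeding the Sato equation $\d_{t_{nl}}S^{(r)}(\hat\d)=-(C_n^{(l,r)})_-\,S^{(r)}(\hat\d)$ into $\hat L^{(i)}$ then gives $\d_{t_{nl}}\hat L^{(i)}=[-(C_n^{(l,r)})_-,\hat L^{(i)}]=[B_n^{(l,r)}-C_n^{(l,r)},\hat L^{(i)}]=[B_n^{(l,r)},\hat L^{(i)}]$, the last step using the commutativity just noted. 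Since $\d_{t_{nl}}$ is a derivation this passes to $\hat L=\sum_i d_i\hat L^{(i)}$ and to $\hat L^s$, which is the third asserted equation.

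Next I would project the Lax equation onto the strictly negative powers of $\hat\d$. By the constraint \eqref{constr} we may write $\hat L^s=\sum_{l=1}^rd_l^sB_s^{(l,r)}+P^{(r)}\hat\d^{-1}Q^{(r)}$ with differential first summand, so $[B_n^{(l,r)},\sum_ld_l^sB_s^{(l,r)}]$ is differential and disappears under $(\cdot)_-$; hence $\d_{t_{nl}}(P^{(r)}\hat\d^{-1}Q^{(r)})=[B_n^{(l,r)},P^{(r)}\hat\d^{-1}Q^{(r)}]_-$. The computational heart is a pair of splitting identities for $B=\sum_{i=0}^nB_i\hat\d^i$: expanding $\hat\d^i\circ f$ by Leibniz shows $(B\,P^{(r)}\hat\d^{-1}Q^{(r)})_-=(BP^{(r)})\hat\d^{-1}Q^{(r)}$, while commuting $B$ rightward past $\hat\d^{-1}$ gives $(P^{(r)}\hat\d^{-1}Q^{(r)}B)_-=P^{(r)}\hat\d^{-1}(B^*Q^{(r)})$ with $B^*Q^{(r)}=\sum_i(-1)^i\hat\d^i(Q^{(r)}B_i)$, precisely the adjoint in the statement. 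Subtracting, $[B_n^{(l,r)},P^{(r)}\hat\d^{-1}Q^{(r)}]_-=(B_n^{(l,r)}P^{(r)})\hat\d^{-1}Q^{(r)}-P^{(r)}\hat\d^{-1}((B_n^{(l,r)})^*Q^{(r)})$, which compared with the Leibniz expansion $\d_{t_{nl}}(P^{(r)}\hat\d^{-1}Q^{(r)})=(\d_{t_{nl}}P^{(r)})\hat\d^{-1}Q^{(r)}+P^{(r)}\hat\d^{-1}(\d_{t_{nl}}Q^{(r)})$ yields the two eigenfunction equations.

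The hard part will be the last comparison. Matching the operator identity above against the $t_{nl}$-derivative of $P^{(r)}\hat\d^{-1}Q^{(r)}$ does not by itself force the separation $\d_{t_{nl}}P^{(r)}=B_n^{(l,r)}P^{(r)}$ and $\d_{t_{nl}}Q^{(r)}=-(B_n^{(l,r)})^*Q^{(r)}$, since the representation $f\hat\d^{-1}g$ of a negative operator is not unique and only the product flow is rigidly determined. I would close this gap through the linear formulation, in which the columns of $P^{(r)}$ and the rows of $Q^{(r)}$ are eigenfunctions and adjoint eigenfunctions of the reduced hierarchy, so that $\d_{t_{nl}}P^{(r)}=B_n^{(l,r)}P^{(r)}$ and $\d_{t_{nl}}Q^{(r)}=-(B_n^{(l,r)})^*Q^{(r)}$ hold at the level of the wave and adjoint wave functions, the splitting computation then serving as the consistency check that these flows preserve the constrained form. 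The second delicate point, to be handled in tandem, is the block and matrix bookkeeping: $P^{(r)}$ is the $r\times(N-r)$ block and $Q^{(r)}$ the $(N-r)\times r$ block of $S_1$, and because $Q^{(r)}$ sits to the right of $\hat\d^{-1}$ while $B$ acts from the right, the non-commutativity of the matrix coefficients forces the ordering $Q^{(r)}B_i$ rather than $B_iQ^{(r)}$ in the adjoint, exactly as recorded in the statement.
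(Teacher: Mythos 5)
The paper never proves this proposition at all --- it is quoted verbatim from \cite{zhangJPA} --- so your attempt can only be judged on its own merits. Its computational core is sound: the identification $C_n^{(l,r)}=(\hat L^{(l)})^n$, the disappearance of $[B_n^{(l,r)},\sum_l d_l^s B_s^{(l,r)}]$ under $(\cdot)_-$, and the two splitting identities (which are exactly Lemma~\ref{lemm} of the paper, with the correct ordering $Q^{(r)}B_i$ forced by matrix non-commutativity) are all right. But there are two genuine gaps. The first is at the very start: you feed a Sato equation for the \emph{reduced} dressing operator, $\d_{t_{nl}}S^{(r)}(\hat\d)=-(C_n^{(l,r)})_-S^{(r)}(\hat\d)$, into $\hat L^{(i)}$. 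That equation is not part of the given data. The paper's data are the $N$-component Sato/Lax equations, whose projections $(\cdot)_\pm$ are taken with respect to $\d$ and whose generators are $B_{\alpha,n}=(L^nR_\alpha)_+$, together with the constraint \eqref{constr}; the claim that the constrained dynamics closes on the $r\times r$ objects built from $\hat\d$ --- i.e.\ the reduced Sato equation --- is essentially equivalent to the Lax equation $\d_{t_{nl}}\hat L^s=[B_n^{(l,r)},\hat L^s]$ you are trying to establish, and deriving it from the $N$-component hierarchy plus \eqref{constr} is the nontrivial content of the reduction in \cite{zhangJPA}. As written, the third equation is assumed rather than proved.

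The second gap you identify yourself, but your repair does not close it. Comparing
$(\d_{t_{nl}}P^{(r)})\hat\d^{-1}Q^{(r)}+P^{(r)}\hat\d^{-1}(\d_{t_{nl}}Q^{(r)})$ with
$(B_n^{(l,r)}P^{(r)})\hat\d^{-1}Q^{(r)}-P^{(r)}\hat\d^{-1}((B_n^{(l,r)})^*Q^{(r)})$ determines the pair of flows only modulo the kernel of $(f,g)\mapsto f\hat\d^{-1}Q^{(r)}+P^{(r)}\hat\d^{-1}g$, which contains every pair $(P^{(r)}C,\,-CQ^{(r)})$ with $C$ a constant $(N-r)\times(N-r)$ matrix, so the eigenfunction equations cannot follow from this comparison alone. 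Your proposed fix --- pass to ``the linear formulation in which the columns of $P^{(r)}$ and rows of $Q^{(r)}$ are eigenfunctions and adjoint eigenfunctions'' --- is circular in this paper's setup: here $P^{(r)}$ and $Q^{(r)}$ are \emph{defined} as the off-diagonal blocks of $S_1$, so their being (adjoint) eigenfunctions is precisely the assertion of the proposition, not an available alternative definition. The honest way to finish is to differentiate $S_1$ using the full $N$-component Sato equation $\d_{t_{nl}}S=-(L^nR_l)_-S$, extract the relevant blocks, and use the constraint to recognize the results as $B_n^{(l,r)}P^{(r)}$ and $-(B_n^{(l,r)})^*Q^{(r)}$; that block computation is the actual substance of the statement and is missing from your argument, your splitting identity then serving (as you say) only as a consistency check.
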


When $r=1,N=2$, one can obtain the $s$-constrained KP hierarchy \cite{chengyiliyishenpla1991,chengyiliyishenjmp1995}. When  $r=1,N\geq 2$, one can obtain the vector $s$-constrained KP hierarchy\cite{vector}.
For general $r,N$, the generalized constrained KP hierarchy can be obtained which  can also be seen in the setting of the generalized Drinfeld-Sokolov construction \cite{generalized DS,generalizedDS}.

\section{Virasoro symmetry of the  constrained multi-component KP hierarchy}
In this section, we shall construct the additional symmetry and discuss the algebraic structure of the
additional symmetry  flows of the $N$-component  constrained KP hierarchy.

To this end, firstly we define $\hat\Gamma^{(r)}$ and the Orlov-Shulman's  operator $\M$ like in \cite{os}
 \begin{equation}
\hat\Gamma^{(r)}=\sum_{\beta=1}^rt_{1\beta}\frac{E_{\beta}}{sd_{\beta}^s}\hat \d^{1-s}+\sum_{\beta=1}^r\sum_{n=2}^{\infty}\frac nsd_{\beta}^{-s}E_{\beta}\hat \d^{n-s}t_{n\beta},\ \ \M= S^{(r)} \hat\Gamma^{(r)} (S^{(r)})^{-1}.
\end{equation}

It  is easy to find the following formula
\begin{equation}\label{commute}
[\partial_{t_{n\beta}}-\hat \d^nE_{\beta},\hat\Gamma^{(r)}]=0.
\end{equation}

There are the following commutation relations
\begin{equation}
[\sum_{\beta=1}^rd_{\beta}^s\hat \d^s E_{\beta},\hat\Gamma^{(r)}]=E,
\end{equation}
which can be verified by a straightforward calculation.
By using the Sato equation, the isospectral flow of the $\M$ operator
is given by
 \begin{equation}
\partial_{t_{n\beta}}\M=[B_n^{(\beta,r)},\M].
\end{equation}  More generally,
\begin{equation}
\partial_{t_{n\beta}}(\M^m\L^l)=[B_n^{(\beta,r)},\M^m\L^l].
\end{equation}

The Lax operator $\L$ and the Orlov-Shulman's $\M$ operator satisfy the following canonical relation
\[[\L,\M]=E.\] Then the additional flows for the time variable $t_{m,n,\beta}$ will be
defined as follows
\begin{equation}
\dfrac{\partial S}{\partial t_{m,n,\beta}}=-(\M^mC_n^{(\beta,r)})_-S,\ \ m,n \in \N, 1\leq \beta\leq r.
\end{equation}
or equivalently
\begin{equation}
\dfrac{\partial \L}{\partial t_{m,n,\beta}}=-[(\M^mC_n^{(\beta,r)})_-,\L], \qquad
\dfrac{\partial \M}{\partial t_{m,n,\beta}}=-[(\M^mC_n^{(\beta,r)})_-,\M].
\end{equation}
Later we can prove the additional flows $\dfrac{\partial }{\partial
t_{m,n,\beta}}$  commute with the flow $\dfrac{\partial
}{\partial t_{k,\gamma}}$, i.e. $[\dfrac{\partial }{\partial
t_{m,n,\beta}},\dfrac{\partial
}{\partial t_{k,\gamma}}]=0$, but do not
commute with each other. They form a kind of $W_{\infty}$ infinite dimensional  additional Lie algebra symmetries which contain Virasoro algebra as a special
subalgebra. To this purpose, we need several lemmas and propositions as preparation firstly.

For above local differential operators $B_n^{(\beta,r)}$, we have the following lemma.
\begin{lemma}
  \label{lemm} $[B_n^{(\beta,r)},P^{(r)}\hat \d^{-1}Q^{(r)}]_-=B_n^{(\beta,r)}(P^{(r)})\hat \d^{-1}Q^{(r)}-P^{(r)}\hat \d^{-1}B_n^{(\beta,r)*}(Q^{(r)})$.
\end{lemma}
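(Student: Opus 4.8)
The plan is to expand the commutator as a difference of two products and compute the negative part of each separately with the generalized Leibniz rule for $\hat\d$. Writing the local operator as $B_n^{(\beta,r)}=\sum_{i=0}^nB_i\hat\d^i$ with $r\times r$ matrix coefficients $B_i$, I would split $[B_n^{(\beta,r)},P^{(r)}\hat\d^{-1}Q^{(r)}]_-$ into $(B_n^{(\beta,r)}P^{(r)}\hat\d^{-1}Q^{(r)})_-$ minus $(P^{(r)}\hat\d^{-1}Q^{(r)}B_n^{(\beta,r)})_-$, thereby reducing the lemma to the two sub-identities $(B_n^{(\beta,r)}P^{(r)}\hat\d^{-1}Q^{(r)})_-=B_n^{(\beta,r)}(P^{(r)})\hat\d^{-1}Q^{(r)}$ and $(P^{(r)}\hat\d^{-1}Q^{(r)}B_n^{(\beta,r)})_-=P^{(r)}\hat\d^{-1}(B_n^{(\beta,r)})^{*}(Q^{(r)})$.

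For the first sub-identity I would observe that $B_n^{(\beta,r)}\circ P^{(r)}$ is again a purely differential operator $\sum_{j\geq0}c_j\hat\d^j$ whose zeroth-order coefficient is exactly $c_0=B_n^{(\beta,r)}(P^{(r)})$, that is, the function obtained by letting all the derivatives in $B_n^{(\beta,r)}$ act on $P^{(r)}$. Multiplying on the right by $\hat\d^{-1}Q^{(r)}$, each coefficient $c_j$ with $j\geq1$ contributes $c_j\hat\d^{j-1}Q^{(r)}$, which is purely of nonnegative order, so the only negative contribution is $c_0\hat\d^{-1}Q^{(r)}=B_n^{(\beta,r)}(P^{(r)})\hat\d^{-1}Q^{(r)}$, as claimed.

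The second sub-identity is the harder step, since it requires moving a differential operator leftward across $\hat\d^{-1}$ rather than a bare projection. Here I would apply $\hat\d^{-1}\circ h=\sum_{k\geq0}(-1)^kh^{(k)}\hat\d^{-1-k}$ to every coefficient of $Q^{(r)}B_n^{(\beta,r)}=\sum_iQ^{(r)}B_i\hat\d^i$, giving $P^{(r)}\hat\d^{-1}Q^{(r)}B_n^{(\beta,r)}=P^{(r)}\sum_i\sum_{k\geq0}(-1)^k(Q^{(r)}B_i)^{(k)}\hat\d^{i-1-k}$. The negative-order terms are precisely those with $k\geq i$; setting $k=i+m$ with $m\geq0$ and resumming the Leibniz series in reverse, $\sum_{m\geq0}(-1)^m\big((Q^{(r)}B_i)^{(i)}\big)^{(m)}\hat\d^{-1-m}=\hat\d^{-1}\circ(Q^{(r)}B_i)^{(i)}$, so the whole negative part collapses to $P^{(r)}\hat\d^{-1}\sum_i(-1)^i(Q^{(r)}B_i)^{(i)}=P^{(r)}\hat\d^{-1}(B_n^{(\beta,r)})^{*}(Q^{(r)})$, in agreement with the adjoint action $\sum_i(-1)^i\hat\d^i(Q^{(r)}B_i)$ defined in the Proposition. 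Subtracting the two sub-identities yields the stated formula.

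The main obstacle is purely bookkeeping the noncommutativity of the matrix coefficients: because $B_i$, $P^{(r)}$ and $Q^{(r)}$ are matrices, the left-to-right placement in every Leibniz expansion must be preserved, and one has to confirm that the convention $(B_n^{(\beta,r)})^{*}(Q^{(r)})=\sum_i(-1)^i\hat\d^i(Q^{(r)}B_i)$ --- with $Q^{(r)}$ standing to the left of $B_i$ --- is exactly the ordering that emerges from pushing $\hat\d^{-1}$ through $Q^{(r)}B_n^{(\beta,r)}$. Once this ordering is tracked, both sub-identities follow from the scalar computation verbatim and the lemma is immediate.
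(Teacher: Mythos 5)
Your proof is correct, and it follows the same overall decomposition as the paper: split $[B_n^{(\beta,r)},P^{(r)}\hat\d^{-1}Q^{(r)}]_-$ into the negative parts of the two products, observe that the left product contributes only $B_n^{(\beta,r)}(P^{(r)})\hat\d^{-1}Q^{(r)}$, and reduce the rest to pushing the differential operator across $\hat\d^{-1}$ from the right. Where you differ is in how that last step is carried out. The paper works only with the bare monomial $\hat\d^n$ and peels off one power of $\hat\d$ at a time via $Q^{(r)}\hat\d=\hat\d\circ Q^{(r)}-(\hat\d Q^{(r)})$, discarding a non-negative term at each stage under the projection, so that after $n$ steps it lands on $P^{(r)}\hat\d^{-1}\left((-\hat\d)^n(Q^{(r)})\right)$; the extension to a general $B_n^{(\beta,r)}=\sum_i B_i\hat\d^i$, with the matrix coefficient $B_i$ absorbed into $Q^{(r)}$ on the correct side, is left implicit. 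You instead expand $\hat\d^{-1}\circ(Q^{(r)}B_i)$ by the full Leibniz series, isolate the genuinely negative tail $k\ge i$, and resum it in closed form into $(-1)^i\hat\d^{-1}\circ\hat\d^i(Q^{(r)}B_i)$. The two computations are equivalent---your resummation is just the closed form of the paper's step-by-step induction---but your version has the merit of treating the coefficients $B_i$ and the ordering $Q^{(r)}B_i$ explicitly, which is precisely the point where the paper's monomial-only proof is silent and where the adjoint convention $(B_n^{(\beta,r)})^*(Q^{(r)})=\sum_i(-1)^i\hat\d^i(Q^{(r)}B_i)$ actually needs to be verified.
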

\begin{proof}
 Firstly we consider a fundamental monomial: $\hat \d^n$ ($n\ge 1$). Then
  \begin{displaymath}
    [\hat \d^n,P^{(r)}\hat \d^{-1}Q^{(r)}]_-=(\hat \d^n(P^{(r)}))\hat \d^{-1}Q^{(r)}- (P^{(r)}\hat \d^{-1}Q^{(r)} \hat \d^n)_-.
  \end{displaymath}
  Notice that the second term can be rewritten in the following way
  \begin{eqnarray*}
    (P^{(r)}\hat \d^{-1}Q^{(r)} \hat \d^{n})_-
    =(P^{(r)}Q^{(r)} \hat \d^{n-1}
        -P^{(r)}\hat \d^{-1}(\hat \d Q^{(r)})\hat \d^{n-1})_- \\
    =(P^{(r)}\hat \d^{-1}(-\hat \d Q^{(r)})\hat \d^{n-1})_-
    =\cdots
    =P^{(r)}\hat \d^{-1}\left((-\hat \d)^n
    (Q^{(r)})\right)=P^{(r)}\hat \d^{-1}(\hat \d^n)^*(Q^{(r)}),
  \end{eqnarray*}
  then the lemma is proved.
\end{proof}

We can get some properties of the Lax operator in the following proposition.

\begin{lemma}The Lax operator $\L$ of the constrained $N$-component KP hierarchy  will satisfy the relation of
\begin{equation}\label{Lk}
(\L^k)_-=\sum_{j=0}^{k-1}\L^{k-j-1}(P^{(r)})\hat \d^{-1}{(\L^*)}^j(Q^{(r)}),\ \ k\in \Z.
\end{equation}
where $\L(\phi):=(\L)_{+}(\phi) + P^{(r)}\hat \d^{-1}(Q^{(r)}\phi),$  for arbitrary $r\times r$ matrix function $\phi$.
\end{lemma}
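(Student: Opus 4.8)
The plan is to argue by induction on $k$, establishing the identity first for $k\ge 1$ and then extending it to $k\le 0$. Throughout I write $\L=\L_++P^{(r)}\hat\d^{-1}Q^{(r)}$, where $\L_+=\sum_{l=1}^r d_l^s B_s^{(l,r)}$ is the purely differential part, so that $(\L)_-=P^{(r)}\hat\d^{-1}Q^{(r)}$. The base case $k=1$ is then exactly the constraint, since $(\L)_-$ matches the single $j=0$ term on the right-hand side. For the inductive step I would compute $(\L^{k+1})_-=(\L^k\cdot\L)_-$ by inserting $\L^k=(\L^k)_++(\L^k)_-$ and, into the right factor, $\L=\L_++P^{(r)}\hat\d^{-1}Q^{(r)}$, and then substituting the induction hypothesis $(\L^k)_-=\sum_{j=0}^{k-1}\L^{k-1-j}(P^{(r)})\hat\d^{-1}(\L^*)^j(Q^{(r)})$, which I abbreviate as $\sum_j f_j\hat\d^{-1}g_j$.

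Three kinds of products then arise, each handled by an elementary computation of the type already carried out in the proof of Lemma \ref{lemm}. First, $\big((\L^k)_+\,P^{(r)}\hat\d^{-1}Q^{(r)}\big)_-=(\L^k)_+(P^{(r)})\hat\d^{-1}Q^{(r)}$, a differential operator on the left passing inside as its action on $P^{(r)}$. Second, $\big(f_j\hat\d^{-1}g_j\,\L_+\big)_-=f_j\hat\d^{-1}\L_+^*(g_j)$, which is precisely the right-multiplication computation of Lemma \ref{lemm} applied monomial by monomial. Third, the product of the two integral parts, $\big(f_j\hat\d^{-1}g_j\,P^{(r)}\hat\d^{-1}Q^{(r)}\big)_-$, is reduced using the operator identity $\hat\d^{-1}w\,\hat\d^{-1}=G\,\hat\d^{-1}-\hat\d^{-1}G$ with $\hat\d G=w$, which splits it as $f_jG_j\hat\d^{-1}Q^{(r)}-f_j\hat\d^{-1}(G_jQ^{(r)})$, where $G_j=\hat\d^{-1}(g_jP^{(r)})$. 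Collecting the terms that end in $\hat\d^{-1}Q^{(r)}$ gives $\big[(\L^k)_+(P^{(r)})+\sum_j f_jG_j\big]\hat\d^{-1}Q^{(r)}$, whose bracket is exactly the dressing action $\L^k(P^{(r)})=(\L^k)_+(P^{(r)})+(\L^k)_-(P^{(r)})$; collecting the remaining terms gives $\sum_j f_j\hat\d^{-1}\big(\L_+^*(g_j)-G_jQ^{(r)}\big)$, whose inner bracket is $\L^*(g_j)=(\L^*)^{j+1}(Q^{(r)})$. Reindexing the two groups then yields $\sum_{j=0}^{k}\L^{k-j}(P^{(r)})\hat\d^{-1}(\L^*)^j(Q^{(r)})$, which is the required expression for $(\L^{k+1})_-$.

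I expect the third, ``double-integral'' term to be the main obstacle: a priori $\big(f_j\hat\d^{-1}g_j\,P^{(r)}\hat\d^{-1}Q^{(r)}\big)_-$ is not of the rank-one shape $(\text{function})\hat\d^{-1}(\text{function})$, and the crux is that its two pieces telescope --- one completing the dressing action on $P^{(r)}$, the other combining with the $\L_+^*$-term to advance the adjoint action on $Q^{(r)}$ by one step. Making this rigorous requires, as a short preliminary observation, the associativity of the formal action of pseudo-differential operators on functions, so that $\L^m(P^{(r)})=\L\big(\L^{m-1}(P^{(r)})\big)$ and $(\L^*)^m(Q^{(r)})=\L^*\big((\L^*)^{m-1}(Q^{(r)})\big)$, together with the explicit adjoint action $\L^*(\psi)=\L_+^*(\psi)-\hat\d^{-1}(\psi P^{(r)})Q^{(r)}$, dual to $\L(\phi)=\L_+(\phi)+P^{(r)}\hat\d^{-1}(Q^{(r)}\phi)$.

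Finally, for $k\le 0$ the right-hand side is read with the usual convention $\sum_{j=0}^{k-1}=-\sum_{j=k}^{-1}$, and the identity follows by a downward induction anchored at $k=0$, where both sides vanish. Since $\L$ is invertible, one first obtains $(\L^{-1})_-$ from $\L\cdot\L^{-1}=E$ and then repeats the three product computations and the same telescoping with $\L^{-1}$ in place of $\L$, the signs working out to the stated sum.
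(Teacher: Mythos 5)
Your proof is correct, but there is in fact no proof in the paper to compare it with: the lemma is stated bare, and even its discrete counterpart \eqref{DLk} is prefaced only by the words ``By induction, the following important lemma can be derived.'' So your argument supplies exactly what the paper omits, and it does so along the route the authors evidently intended (induction on $k$ using the rank-one shape of $\L_-$). The details check out under the paper's conventions: the base case is the constraint itself; of the products arising from the splitting of $(\L^k\cdot\L)_-$, the two that mix a differential with an integral factor are precisely the two halves of the computation in the proof of Lemma \ref{lemm}, where your formula $\bigl(f\hat\d^{-1}g\,\L_+\bigr)_-=f\hat\d^{-1}\L_+^*(g)$ correctly uses the paper's right-multiplication convention $A^*(\psi)=\sum_i(-1)^i\hat\d^i(\psi A_i)$; the double-integral splitting $\hat\d^{-1}\circ w\circ\hat\d^{-1}=G\hat\d^{-1}-\hat\d^{-1}\circ G$ with $\hat\d G=w$ is the continuous analogue of the paper's own discrete identities \eqref{85} and \eqref{Drelation}; your adjoint formula $\L^*(\psi)=\L_+^*(\psi)-\hat\d^{-1}(\psi P^{(r)})Q^{(r)}$ is indeed the dual of $\L(\phi)=\L_+(\phi)+P^{(r)}\hat\d^{-1}(Q^{(r)}\phi)$ with respect to the pairing $\int\mathrm{tr}(\phi\psi)$; and the telescoping identification of $(\L^k)_+(P^{(r)})+\sum_jf_jG_j$ with $\L^k(P^{(r)})$, and of $\L_+^*(g_j)-G_jQ^{(r)}$ with $(\L^*)^{j+1}(Q^{(r)})$, is exactly right, after which the reindexing yields the formula for $k+1$. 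Two minor caveats: the compatibility of the iterated action with operator composition (your ``associativity'' observation) holds only modulo the usual fixed choice of antiderivative, so it should be stated as a convention rather than proved as a fact; and your treatment of $k\le 0$ remains a sketch, but since the paper only ever invokes \eqref{Lk} for nonnegative powers (in the definition \eqref{y} of $T_k^{(\alpha,r)}$ and in the Virasoro flow computations), anchoring the result at $k\ge 0$ suffices for everything downstream.
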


The action of original
additional flows of the  constrained $N$-component KP hierarchy  is  expressed by
\begin{equation}\label{additional}
(\partial_{1,k,\beta} \L)_-=[(\M C_k^{(\beta,r)})_+,\L]_-+(C_k^{(\beta,r)})_-.
\end{equation}

To keep the consistency with flow equations on eigenfunction and adjoint eigenfunction $P^{(r)},Q^{(r)}$, we shall introduce an operator
$T_k^{(\alpha,r)}$ as following to modify the additional symmetry
of the constrained multi-component KP hierarchy.

We now introduce a pseudo  differential operator  $T_k^{(\alpha,r)}$,
\begin{eqnarray}
T_k^{(\alpha,r)}&=&0,k=-1,0,1,\label{y12}\\
T_k^{(\alpha,r)}&=&\sum_{j=0}^{k-1}[j-\frac{1}{2}(k-1)]C_{k-1-j}^{(\alpha,r)}(P^{(r)})
\hat \d^{-1}(C_{j}^{(\alpha,r)*})(Q^{(r)}),k\geq 2.\label{y}
\end{eqnarray}

The following
lemmas are
necessary to concern the Virasoro symmetry.
\begin{lemma}If $
X=M\hat \d^{-1}N,$ then
\begin{equation}\label{operatorXL}
[X,\L]_-=[M\hat \d^{-1}\L^*(N)-\L(M)\hat \d^{-1}N]+ [X(P^{(r)})\hat \d^{-1}Q^{(r)}-P^{(r)}\hat \d^{-1}X^*(Q^{(r)})].
\end{equation}
\end{lemma}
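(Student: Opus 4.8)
The plan is to split the constrained Lax operator into its differential part and its rank-type tail and to treat the two pieces separately. Write $\L=D+P^{(r)}\hat\d^{-1}Q^{(r)}$ with $D:=(\L)_+=\sum_{l=1}^{r}d_l^sB_s^{(l,r)}$ a purely differential operator, so that $[X,\L]_-=[X,D]_-+[X,P^{(r)}\hat\d^{-1}Q^{(r)}]_-$, and it suffices to identify the negative part of each commutator and then recombine them into the operator actions $\L(\cdot)$ and $\L^*(\cdot)$ on functions.

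First I would dispose of the differential piece. For $X=M\hat\d^{-1}N$ and any differential operator $D$ the two elementary identities $(M\hat\d^{-1}N\,D)_-=M\hat\d^{-1}D^*(N)$ and $(D\,M\hat\d^{-1}N)_-=D(M)\hat\d^{-1}N$ hold. The first follows by commuting $\hat\d^{-1}$ past the coefficients of the differential operator $ND$, which is exactly the telescoping computation already carried out in the proof of Lemma \ref{lemm}; the second holds because, after moving $\hat\d^{i}$ through $M$, only the top term $\hat\d^{i}(M)\hat\d^{-1}$ of each monomial survives into negative order. Subtracting gives $[X,D]_-=M\hat\d^{-1}D^*(N)-D(M)\hat\d^{-1}N$.

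Next I would treat the tail. Since $X$ lowers order, both $X\,P^{(r)}\hat\d^{-1}Q^{(r)}=M\hat\d^{-1}(NP^{(r)})\hat\d^{-1}Q^{(r)}$ and $P^{(r)}\hat\d^{-1}Q^{(r)}\,X=P^{(r)}\hat\d^{-1}(Q^{(r)}M)\hat\d^{-1}N$ are already of order $\le-2$, hence equal to their own negative parts. Applying the integration-by-parts identity $\hat\d^{-1}f\,\hat\d^{-1}=h\,\hat\d^{-1}-\hat\d^{-1}h$, where $h$ is the antiderivative $\hat\d^{-1}(f)$ of the function $f$, with $f=NP^{(r)}$ and $f=Q^{(r)}M$ respectively, I can peel off the two principal terms $X(P^{(r)})\hat\d^{-1}Q^{(r)}$ and $P^{(r)}\hat\d^{-1}X^{*}(Q^{(r)})$, recognising $X(P^{(r)})=M\hat\d^{-1}(NP^{(r)})$ and $X^{*}(Q^{(r)})=-N\hat\d^{-1}(MQ^{(r)})$ from the definitions of the action of $X$ and its adjoint. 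What remains are two correction terms of the shape $-M\hat\d^{-1}\bigl(\hat\d^{-1}(NP^{(r)})\,Q^{(r)}\bigr)$ and $+P^{(r)}\hat\d^{-1}\bigl(\hat\d^{-1}(Q^{(r)}M)\,N\bigr)$.

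Finally I would assemble the pieces. The correction coming from $X\,P^{(r)}\hat\d^{-1}Q^{(r)}$ is precisely the tail of $\L^{*}(N)$, so that $M\hat\d^{-1}D^{*}(N)$ together with it collapses to $M\hat\d^{-1}\L^{*}(N)$; symmetrically, $D(M)\hat\d^{-1}N$ together with the term carrying $P^{(r)}\hat\d^{-1}(Q^{(r)}M)$ collapses to $\L(M)\hat\d^{-1}N$, using $\L(M)=D(M)+P^{(r)}\hat\d^{-1}(Q^{(r)}M)$ and the analogous tail for $\L^{*}$. Collecting the surviving terms yields exactly \eqref{operatorXL}. The step I expect to be delicate is this last reassembly in the matrix (noncommutative) setting: the two correction terms must be matched against the tails of $\L^{*}(N)$ and $X^{*}(Q^{(r)})$ with the rectangular matrix factors $P^{(r)},Q^{(r)}$ in the correct order, an order that is dictated by the adjoint convention $(B_n^{(\beta,r)})^{*}Q^{(r)}=\sum_i(-1)^i\hat\d^i(Q^{(r)}B_i)$ fixed earlier. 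Keeping these orderings consistent, rather than any analytic estimate, is the real crux; in the scalar reduction all factors commute and the identity is immediate, which provides a useful sanity check before handling the matrix bookkeeping.
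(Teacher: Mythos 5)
The paper states this lemma with no proof at all, so there is nothing to compare you against; judged on its own, your skeleton is the right one and essentially the only one available: split $\L=D+P^{(r)}\hat\d^{-1}Q^{(r)}$ with $D=(\L)_+$, use the two elementary identities $(XD)_-=M\hat\d^{-1}D^*(N)$ and $(DX)_-=D(M)\hat\d^{-1}N$ for the differential part (both of which you justify correctly via the telescoping of Lemma \ref{lemm} and the Leibniz rule), and integrate by parts on the tail with $\hat\d^{-1}f\hat\d^{-1}=(\hat\d^{-1}f)\hat\d^{-1}-\hat\d^{-1}(\hat\d^{-1}f)$. All three ingredients are sound.

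The gap is that you get the crux wrong — the very matrix-ordering point you flag at the end. The adjoint convention $(B_n^{(\beta,r)})^{*}Q^{(r)}=\sum_i(-1)^i\hat\d^i(Q^{(r)}B_i)$ places the acted-on function to the \emph{left} of the operator coefficients, and its extension to $X=M\hat\d^{-1}N$ is
\begin{equation*}
X^*(Q^{(r)})=-\bigl(\hat\d^{-1}(Q^{(r)}M)\bigr)N,
\end{equation*}
not your $X^*(Q^{(r)})=-N\hat\d^{-1}(MQ^{(r)})$, which reverses every factor (and is not even a well-formed matrix product unless the column size of $M$ happens to equal the row size of $Q^{(r)}$). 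With your formula the peeling step fails: the would-be principal term $-P^{(r)}\hat\d^{-1}X^*(Q^{(r)})=P^{(r)}\hat\d^{-1}\bigl(N\hat\d^{-1}(MQ^{(r)})\bigr)$ contains a nested $\hat\d^{-1}$ and never occurs among the four terms your expansion produces. The correct bookkeeping, writing $h=\hat\d^{-1}(Q^{(r)}M)$, is $-P^{(r)}\hat\d^{-1}Q^{(r)}X=-P^{(r)}h\hat\d^{-1}N+P^{(r)}\hat\d^{-1}(hN)$, where $P^{(r)}\hat\d^{-1}(hN)=-P^{(r)}\hat\d^{-1}X^*(Q^{(r)})$ is the term that survives into \eqref{operatorXL}, while $-P^{(r)}h\hat\d^{-1}N$ is the correction absorbed into $-\L(M)\hat\d^{-1}N$ via $\L(M)=D(M)+P^{(r)}\hat\d^{-1}(Q^{(r)}M)$; these roles are exactly opposite to the labels in your third paragraph (your final paragraph then performs the absorption the right way, contradicting those labels). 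Correct the adjoint formula and swap the two labels, and your proof closes.
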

\begin{lemma}
The action of flows $\partial_{t_{l,\beta}}$ of the  constrained $N$-component KP hierarchy  on the
$T_k^{(\alpha,r)}$ is
\begin{equation}\label{ykderivat}
\partial_{t_{l,\beta}}
T_k^{(\alpha,r)}=[B_l^{(\beta,r)},T_k^{(\alpha,r)}]_-.
\end{equation}
\end{lemma}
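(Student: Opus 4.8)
The plan is to exploit that $T_k^{(\alpha,r)}$ is a finite sum, with \emph{constant} (time-independent) coefficients $a_j:=j-\tfrac{1}{2}(k-1)$, of elementary blocks of the shape $\Phi_{k-1-j}\hat\d^{-1}\Psi_j$, where I abbreviate $\Phi_m:=C_m^{(\alpha,r)}(P^{(r)})$ and $\Psi_j:=C_j^{(\alpha,r)*}(Q^{(r)})$. Since $\partial_{t_{l,\beta}}$ kills the $a_j$ and the projection $[\,\cdot\,]_-$ is linear, it suffices to prove the identity one block at a time, i.e. $\partial_{t_{l,\beta}}\big(\Phi_{m}\hat\d^{-1}\Psi_{j}\big)=\big[B_l^{(\beta,r)},\Phi_{m}\hat\d^{-1}\Psi_{j}\big]_-$, and then sum against the $a_j$ to obtain \eqref{ykderivat}.

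The key preparatory step is to determine how the dressed operators $C_k^{(\alpha,r)}=S^{(r)}E_\alpha^{(r)}\hat\d^k(S^{(r)})^{-1}$ evolve under $\partial_{t_{l,\beta}}$. Inserting the Sato equation $\partial_{t_{l,\beta}}S^{(r)}=-(C_l^{(\beta,r)})_-S^{(r)}$ together with $\partial_{t_{l,\beta}}(S^{(r)})^{-1}=(S^{(r)})^{-1}(C_l^{(\beta,r)})_-$ into this definition gives $\partial_{t_{l,\beta}}C_k^{(\alpha,r)}=[C_k^{(\alpha,r)},(C_l^{(\beta,r)})_-]$. The decisive observation is that the dressed operators commute among themselves: because $E_\alpha^{(r)}E_\beta^{(r)}=\delta_{\alpha\beta}E_\alpha^{(r)}$ and $\hat\d$ commutes with the constant projection matrices, one finds $C_k^{(\alpha,r)}C_m^{(\beta,r)}=\delta_{\alpha\beta}C_{k+m}^{(\alpha,r)}$, hence $[C_k^{(\alpha,r)},C_l^{(\beta,r)}]=0$. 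This upgrades the previous bracket to the \emph{full} commutator, $[C_k^{(\alpha,r)},(C_l^{(\beta,r)})_-]=[(C_l^{(\beta,r)})_+,C_k^{(\alpha,r)}]=[B_l^{(\beta,r)},C_k^{(\alpha,r)}]$, so that $\partial_{t_{l,\beta}}C_k^{(\alpha,r)}=[B_l^{(\beta,r)},C_k^{(\alpha,r)}]$. Combining this with the product rule and the hierarchy flows $\partial_{t_{l,\beta}}P^{(r)}=B_l^{(\beta,r)}(P^{(r)})$, $\partial_{t_{l,\beta}}Q^{(r)}=-(B_l^{(\beta,r)})^*(Q^{(r)})$ from the Proposition of Section~2, the cross terms cancel and I conclude that $\Phi_m$ and $\Psi_j$ are again an eigenfunction and an adjoint eigenfunction, namely
\begin{equation*}
\partial_{t_{l,\beta}}\Phi_m=B_l^{(\beta,r)}(\Phi_m),\qquad
\partial_{t_{l,\beta}}\Psi_j=-B_l^{(\beta,r)*}(\Psi_j).
\end{equation*}

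With these flows in hand, the product rule on a single block yields
\begin{equation*}
\partial_{t_{l,\beta}}\big(\Phi_{m}\hat\d^{-1}\Psi_{j}\big)
=B_l^{(\beta,r)}(\Phi_{m})\,\hat\d^{-1}\Psi_{j}
-\Phi_{m}\,\hat\d^{-1}B_l^{(\beta,r)*}(\Psi_{j}).
\end{equation*}
It remains to recognise the right-hand side as a negative-part commutator, and for this I invoke Lemma~\ref{lemm}. Although it is stated for $P^{(r)}\hat\d^{-1}Q^{(r)}$, its proof uses only that $B_l^{(\beta,r)}$ is a matrix differential operator and that the block has the form ``function$\,\cdot\,\hat\d^{-1}\cdot\,$function''; it therefore applies verbatim with $P^{(r)},Q^{(r)}$ replaced by $\Phi_m,\Psi_j$, giving $[B_l^{(\beta,r)},\Phi_m\hat\d^{-1}\Psi_j]_-=B_l^{(\beta,r)}(\Phi_m)\hat\d^{-1}\Psi_j-\Phi_m\hat\d^{-1}B_l^{(\beta,r)*}(\Psi_j)$. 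Matching the two displays proves the per-block identity, and summing against the constants $a_j$ over $0\le j\le k-1$ delivers \eqref{ykderivat}; the degenerate cases $k=-1,0,1$ are immediate since there $T_k^{(\alpha,r)}=0$.

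The main obstacle I anticipate lies in the second step rather than in the final bookkeeping: one must pin down the sign and precise form of the Sato flow and, above all, establish the commutativity $[C_k^{(\alpha,r)},C_l^{(\beta,r)}]=0$, since it is exactly this that turns $\partial_{t_{l,\beta}}C_k^{(\alpha,r)}$ into the \emph{full} commutator with $B_l^{(\beta,r)}$ and makes the cross terms cancel so that $\Phi_m$ and $\Psi_j$ remain genuine (adjoint) eigenfunctions. Once that is secured, the remainder reduces to the already-established formal identity of Lemma~\ref{lemm} and to linearity, which are routine.
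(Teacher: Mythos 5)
Your proof is correct and follows essentially the same route as the paper's: expand $T_k^{(\alpha,r)}$ by the product rule, use that the blocks $C_{k-1-j}^{(\alpha,r)}(P^{(r)})$ and $C_{j}^{(\alpha,r)*}(Q^{(r)})$ flow as eigenfunction and adjoint eigenfunction under $\partial_{t_{l,\beta}}$, and then identify the resulting expression with $[B_l^{(\beta,r)},T_k^{(\alpha,r)}]_-$ via the mechanism of Lemma~\ref{lemm}. The only difference is one of completeness: the paper invokes the block flow equations silently, whereas you derive them from the Sato equation together with the commutativity $[C_k^{(\alpha,r)},C_l^{(\beta,r)}]=0$, thereby justifying explicitly the step the paper leaves implicit.
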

\begin{proof}
\begin{eqnarray*}
\partial_{t_{l,\beta}}T_k^{(\alpha,r)}&=&\partial_{t_{l,\beta}}(\sum_{j=0}^{k-1}[j-\frac{1}{2}(k-1)]C_{k-1-j}^{(\alpha,r)}(P^{(r)})\hat \d^{-1}(C_{j}^{(\alpha,r)*})(Q^{(r)}))\\\nonumber
&=&\sum_{j=0}^{k-1}[j-\frac{1}{2}(k-1)]\{\partial_{t_{l,\beta}}(C_{k-1-j}^{(\alpha,r)}(P^{(r)}))\hat \d^{-1}(C_{j}^{(\alpha,r)*})(Q^{(r)})\\
&&+C_{k-1-j}^{(\alpha,r)}(P^{(r)})\hat \d^{-1}\partial_{t_{l,\beta}}((C_{j}^{(\alpha,r)*})(Q^{(r)}))\}\\\nonumber
&=&[B_l^{(\beta,r)}\circ \sum_{j=0}^{k-1}[j-\frac{1}{2}(k-1)]C_{k-1-j}^{(\alpha,r)}(P^{(r)})\hat \d^{-1}(C_{j}^{(\alpha,r)*})(Q^{(r)})]_-\\
&&-[(\sum_{j=0}^{k-1}[j-\frac{1}{2}(k-1)]C_{k-1-j}^{(\alpha,r)}(P^{(r)})\hat \d^{-1}(C_{j}^{(\alpha,r)*})(Q^{(r)}))\circ B_l^{(\beta,r)}]_-\\
&=&[B_l^{(\beta,r)},(\sum_{j=0}^{k-1}[j-\frac{1}{2}(k-1)]C_{k-1-j}^{(\alpha,r)}(P^{(r)})\hat \d^{-1}(C_{j}^{(\alpha,r)*})(Q^{(r)}))]_-\\
&=&[B_l^{(\beta,r)},T_k^{(\alpha,r)}]_-.
\end{eqnarray*}
\end{proof}

Further, the following expression of $[T_{k-1}^{(\beta,r)},\L]_-$ is also
necessary to define the additional flows of the  constrained $N$-component KP hierarchy.
\begin{lemma}
The Lax operator $\L$ of the constrained $N$-component KP hierarchy and $T_{k-1}^{(\beta,r)}$ has the following relation,
\begin{eqnarray}\notag
[T_{k-1}^{(\beta,r)},\L]_-&=&-(C_{k}^{(\beta,r)})_-+\frac{k}{2}[P^{(r)}\hat \d^{-1}(C_{k-1}^{(\beta,r)})^*(Q^{(r)})+C_{k-1}^{(\beta,r)}(P^{(r)})\hat \d^{-1}Q^{(r)}\\
\label{ykl}
&&+T_{k-1}^{(\beta,r)}(P^{(r)})\hat \d^{-1}Q^{(r)}-P^{(r)}\hat \d^{-1}(T_{k-1}^{(\beta,r)})^*(Q^{(r)})].
\end{eqnarray}
\end{lemma}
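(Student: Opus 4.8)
The plan is to exploit the fact that $T_{k-1}^{(\beta,r)}$ is a finite sum of elementary pseudo-differential operators of the shape $M\hat\d^{-1}N$, so that the commutator formula \eqref{operatorXL} can be applied summand by summand. Writing
$T_{k-1}^{(\beta,r)}=\sum_{j=0}^{k-2}a_j\,M_j\hat\d^{-1}N_j$ with $a_j=j-\tfrac12(k-2)$, $M_j=C_{k-2-j}^{(\beta,r)}(P^{(r)})$ and $N_j=C_{j}^{(\beta,r)*}(Q^{(r)})$, the bilinearity of $[\,\cdot\,,\L]_-$ lets me sum \eqref{operatorXL} over $j$. This immediately splits $[T_{k-1}^{(\beta,r)},\L]_-$ into a bulk part $\sum_j a_j\bigl(M_j\hat\d^{-1}\L^*(N_j)-\L(M_j)\hat\d^{-1}N_j\bigr)$ and a boundary part $T_{k-1}^{(\beta,r)}(P^{(r)})\hat\d^{-1}Q^{(r)}-P^{(r)}\hat\d^{-1}(T_{k-1}^{(\beta,r)})^*(Q^{(r)})$, the latter obtained by reassembling, via linearity, the $X(P^{(r)})\hat\d^{-1}Q^{(r)}$ and $P^{(r)}\hat\d^{-1}X^*(Q^{(r)})$ contributions of each summand. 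The boundary part already supplies the $T_{k-1}^{(\beta,r)}$ terms appearing on the right-hand side of \eqref{ykl}.

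Next I would record the two auxiliary facts that drive the bulk computation. The first is the pair of raising relations expressing the action of $\L$ and $\L^*$ on the coefficient functions as a shift of the $C$-index, namely $\L(C_m^{(\beta,r)}(P^{(r)}))=C_{m+1}^{(\beta,r)}(P^{(r)})$ and $\L^*(C_m^{(\beta,r)*}(Q^{(r)}))=C_{m+1}^{(\beta,r)*}(Q^{(r)})$, together with the normalizations $C_0^{(\beta,r)}(P^{(r)})=P^{(r)}$ and $C_0^{(\beta,r)*}(Q^{(r)})=Q^{(r)}$. The second is the closed form $(C_k^{(\beta,r)})_-=\sum_{m=0}^{k-1}C_{k-1-m}^{(\beta,r)}(P^{(r)})\hat\d^{-1}C_m^{(\beta,r)*}(Q^{(r)})$, the $C$-tower analogue of Lemma \eqref{Lk}. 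Substituting the raising relations into the bulk part converts each summand into $M_j\hat\d^{-1}C_{j+1}^{(\beta,r)*}(Q^{(r)})-C_{k-1-j}^{(\beta,r)}(P^{(r)})\hat\d^{-1}N_j$, so that after relabelling $j\mapsto j-1$ in the first piece the two pieces become comparable term by term.

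The heart of the argument is then the index and weight bookkeeping. Collecting the coefficient of each monomial $C_{k-1-m}^{(\beta,r)}(P^{(r)})\hat\d^{-1}C_m^{(\beta,r)*}(Q^{(r)})$ for $m=0,\dots,k-1$, the interior contributions carry weights $a_{m-1}$ and $-a_m$, whose sum is the constant $-1$; summing these interior monomials reproduces exactly $-(C_k^{(\beta,r)})_-$ through the closed form above. The two endpoints $m=0$ and $m=k-1$ each lose a cancelling neighbour, so the leftover weight is $\tfrac12(k-2)+1=\tfrac k2$; using $C_0^{(\beta,r)}(P^{(r)})=P^{(r)}$ and $C_0^{(\beta,r)*}(Q^{(r)})=Q^{(r)}$ this yields precisely $\tfrac k2\bigl[C_{k-1}^{(\beta,r)}(P^{(r)})\hat\d^{-1}Q^{(r)}+P^{(r)}\hat\d^{-1}(C_{k-1}^{(\beta,r)})^*(Q^{(r)})\bigr]$. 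Adding the boundary part recovered in the first step then assembles the full right-hand side of \eqref{ykl}.

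I expect the delicate point to be exactly this weight accounting: one must carry the factors $j-\tfrac12(k-2)$ through the shift of the summation index so that the interior terms collapse to a clean $-(C_k^{(\beta,r)})_-$ while the two surviving endpoints acquire the coefficient $\tfrac k2$, and one must keep careful track of which group of terms the factor $\tfrac k2$ multiplies so that the $T_{k-1}^{(\beta,r)}$ boundary contribution is normalized correctly. The supporting auxiliary relations are themselves established by dressing: since $C_m^{(\beta,r)}=S^{(r)}E_\beta^{(r)}\hat\d^m(S^{(r)})^{-1}$ and $\L=\hat L^s$ share the dressing operator $S^{(r)}$, the raising relations reduce to the purely algebraic facts $[E_\beta^{(r)},\hat\d]=0$ and $E_\beta^{(r)}E_\gamma^{(r)}=\delta_{\beta\gamma}E_\beta^{(r)}$, while the closed form for $(C_k^{(\beta,r)})_-$ follows by the same monomial telescoping as in Lemma \ref{lemm}, applied to $\hat\d^m$ and dressed through $S^{(r)}$.
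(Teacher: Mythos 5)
Your proposal is correct and follows essentially the same route as the paper's own proof: decompose $T_{k-1}^{(\beta,r)}$ into its monomials $M_j\hat\d^{-1}N_j$, apply \eqref{operatorXL} term by term, use the index-raising relations and the shift-of-summation bookkeeping, and absorb the interior terms (each carrying net weight $-1$) into $-(C_{k}^{(\beta,r)})_-$ so that the endpoints emerge with coefficient $\tfrac{k}{2}$. The only difference is one of care, not substance: you state explicitly the $C$-tower analogue of \eqref{Lk} and its dressing justification, whereas the paper cites \eqref{Lk} itself and leaves that step implicit.
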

\begin{proof}
A direct calculation can lead to
\begin{eqnarray*}
[T_{k-1}^{(\beta,r)},\L]_-&=&[\sum^{k-2}_{j=0}[j-\frac{1}{2}(k-2)]C_{k-2-j}^{(\beta,r)}(P^{(r)})\hat \d^{-1}(C_{j}^{(\beta,r)})^*(Q^{(r)}),\L]_-\\\nonumber
&=&\sum^{k-2}_{j=0}[j-\frac{1}{2}(k-2)]C_{k-2-j}^{(\beta,r)}(P^{(r)})\hat \d^{-1}(C_{j+1}^{(\beta,r)})^*(Q^{(r)})\\\nonumber
&&-\sum^{k-2}_{j=0}[j-\frac{1}{2}(k-2)]C_{k-1-j}^{(\alpha,r)}(P^{(r)})\hat \d^{-1}(C_{j}^{(\beta,r)})^*(Q^{(r)})\\\nonumber
&&+(T_{k-1}^{(\beta,r)}(P^{(r)})\hat \d^{-1}Q^{(r)}-P^{(r)}\hat \d^{-1}(T_{k-1}^{(\beta,r)})^*(Q^{(r)}))\\\nonumber
&=&-\sum^{k-2}_{j=1}C_{k-1-j}^{(\beta,r)}(P^{(r)})\hat \d^{-1}(C_{j}^{(\beta,r)})^*(Q^{(r)})\\\nonumber
&&+(\frac{k}{2}-1)[P^{(r)}\hat \d^{-1}{(C_{k-1}^{(\beta,r)}}^*(Q^{(r)})+C_{k-1}^{(\beta,r)}(P^{(r)})\hat \d^{-1}Q^{(r)}]\\\nonumber
&&+(T_{k-1}^{(\beta,r)}(P^{(r)})\hat \d^{-1}Q^{(r)}-P^{(r)}\hat \d^{-1}(T_{k-1}^{(\beta,r)})^*(Q^{(r)})),
\end{eqnarray*}
which further help us deriving eq.\eqref{ykl} using eq.\eqref{Lk}.
\end{proof}

Putting together (\ref{additional}) and (\ref{ykl}), we define the
additional flows of the  constrained $N$-component KP hierarchy  as
 \begin{equation}\label{tkflow}
\partial_{t_{1,k,\beta}}\L=[-(\M C_k^{(\beta,r)})_-+T_{k-1}^{(\beta,r)},\L],
\end{equation}
where
$T_{k-1}^{(\beta,r)}=0$, for $k=0,1,2$, such that the right-hand side of
(\ref{tkflow}) is in the form of derivation of Lax equations. Generally, one can also derive
\begin{equation}\label{MLK}
\partial_{t_{1,k,\beta}}(\M\L^l)=[-(\M C_k^{(\beta,r)})_-+T_{k-1}^{(\beta,r)},\M\L^l].
\end{equation}

Now we calculate the action of the additional flows eq.\eqref{tkflow}
on the eigenfunction $P^{(r)}$ and $Q^{(r)}$ of the  constrained $N$-component KP hierarchy.
\begin{theorem}\label{symmetre}
The acting of additional flows of  constrained $N$-component KP hierarchy on the eigenfunction $P^{(r)}$ and $Q^{(r)}$ are
\begin{equation}\label{BAfunction}
\begin{split}
{\partial_{t_{1,k,\beta}}P^{(r)}}&=(\M
 C_k^{(\beta,r)})_+(P^{(r)})+T_{k-1}^{(\beta,r)}(P^{(r)})+\frac{k}{2} C_{k-1}^{(\beta,r)}(P^{(r)}),\\
 {\partial_{t_{1,k,\beta}}Q^{(r)}}&=-(\M
 C_k^{(\beta,r)})^*_+(Q^{(r)})-(T_{k-1}^{(\beta,r)})^*(Q^{(r)})+\frac{k}{2}{ C_{k-1}^{(\beta,r)*}}(Q^{(r)}).
\end{split}
\end{equation}
\end{theorem}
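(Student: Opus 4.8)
The plan is to recover the two flows in \eqref{BAfunction} from the negative (purely pseudo-differential) part of the Lax-type flow \eqref{tkflow}. Since the constraint fixes $\L_-=P^{(r)}\hat \d^{-1}Q^{(r)}$, the Leibniz rule gives
\begin{equation*}
(\partial_{t_{1,k,\beta}}\L)_-=(\partial_{t_{1,k,\beta}}P^{(r)})\hat \d^{-1}Q^{(r)}+P^{(r)}\hat \d^{-1}(\partial_{t_{1,k,\beta}}Q^{(r)}),
\end{equation*}
so it suffices to project the right-hand side of \eqref{tkflow} onto its negative part, arrange the result into the canonical form $A\,\hat \d^{-1}Q^{(r)}+P^{(r)}\hat \d^{-1}B$, and then identify $\partial_{t_{1,k,\beta}}P^{(r)}=A$ and $\partial_{t_{1,k,\beta}}Q^{(r)}=B$.

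First I would split the bracket in \eqref{tkflow} as $[-(\M C_k^{(\beta,r)})_-,\L]_-+[T_{k-1}^{(\beta,r)},\L]_-$ and treat the two summands separately. For the first, write $-(\M C_k^{(\beta,r)})_-=(\M C_k^{(\beta,r)})_+-\M C_k^{(\beta,r)}$ and use the two structural identities $[C_k^{(\beta,r)},\L]=0$ and $[\L,\M]=E$. The vanishing commutator follows at once from the dressing expressions $C_k^{(\beta,r)}=S^{(r)}E_\beta^{(r)}\hat \d^k(S^{(r)})^{-1}$ and $\L=\hat L^s$ together with the orthogonality $E_\beta^{(r)}E_i^{(r)}=\delta_{\beta i}E_\beta^{(r)}$, while $[\L,\M]=E$ is the canonical relation. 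Together they give $[\M C_k^{(\beta,r)},\L]=-C_k^{(\beta,r)}$, whence $[-(\M C_k^{(\beta,r)})_-,\L]=[(\M C_k^{(\beta,r)})_+,\L]+C_k^{(\beta,r)}$. Taking negative parts and noting that $(\M C_k^{(\beta,r)})_+$ and $\L_+$ are differential operators (so their commutator has no negative part), I reduce $[(\M C_k^{(\beta,r)})_+,\L]_-$ to $[(\M C_k^{(\beta,r)})_+,P^{(r)}\hat \d^{-1}Q^{(r)}]_-$, which Lemma \ref{lemm} evaluates to $(\M C_k^{(\beta,r)})_+(P^{(r)})\hat \d^{-1}Q^{(r)}-P^{(r)}\hat \d^{-1}(\M C_k^{(\beta,r)})_+^*(Q^{(r)})$. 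Thus the first summand contributes this expression plus a leftover $(C_k^{(\beta,r)})_-$.

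Next I would feed in the relation \eqref{ykl} for the second summand $[T_{k-1}^{(\beta,r)},\L]_-$. Its expansion also carries a term $-(C_k^{(\beta,r)})_-$ (produced there through \eqref{Lk}), which cancels precisely against the $(C_k^{(\beta,r)})_-$ surviving from the first summand. After this cancellation every remaining term already stands in canonical form: the factors to the left of $\hat \d^{-1}Q^{(r)}$ assemble into $(\M C_k^{(\beta,r)})_+(P^{(r)})+\frac{k}{2}C_{k-1}^{(\beta,r)}(P^{(r)})+T_{k-1}^{(\beta,r)}(P^{(r)})$, and the factors to the right of $P^{(r)}\hat \d^{-1}$ assemble into $-(\M C_k^{(\beta,r)})_+^*(Q^{(r)})+\frac{k}{2}C_{k-1}^{(\beta,r)*}(Q^{(r)})-(T_{k-1}^{(\beta,r)})^*(Q^{(r)})$. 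Comparing with the Leibniz expansion above yields exactly \eqref{BAfunction}.

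I expect the main difficulty to be bookkeeping rather than structural. One must track the $\frac{k}{2}$-prefactor of \eqref{ykl} with care: after reabsorbing the endpoint contributions of $(C_k^{(\beta,r)})_-$ via \eqref{Lk}, the two $C_{k-1}^{(\beta,r)}$-type terms accumulate total coefficient $\frac{k}{2}$, whereas the $T_{k-1}^{(\beta,r)}$-type terms enter with coefficient one; keeping these distinct is what produces the correct prefactors in \eqref{BAfunction}. A second point deserving a remark is the legitimacy of reading off $A$ and $B$ by comparison: this is justified because the additional flow is designed to preserve the constrained shape $\L_-=P^{(r)}\hat \d^{-1}Q^{(r)}$, so the negative part must factor through the fixed outer factors $P^{(r)}$ on the left and $Q^{(r)}$ on the right, making the identification of $\partial_{t_{1,k,\beta}}P^{(r)}$ and $\partial_{t_{1,k,\beta}}Q^{(r)}$ unambiguous.
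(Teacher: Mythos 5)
Your proposal is correct and follows essentially the same route as the paper's own proof: take the negative part of the modified flow \eqref{tkflow}, use the identity \eqref{additional} (which you rederive from $[\L,\M]=E$ and $[C_k^{(\beta,r)},\L]=0$) together with Lemma \ref{lemm} and relation \eqref{ykl}, cancel the $(C_k^{(\beta,r)})_-$ terms, and compare with the Leibniz expansion \eqref{tkl} to read off \eqref{BAfunction}. The only difference is expository, in that you make explicit the derivation of \eqref{additional} and the unambiguity of the identification, both of which the paper leaves implicit.
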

\begin{proof}
Substitution of (\ref{ykl}) to negative part of (\ref{tkflow}) shows
\begin{eqnarray}\label{tkflow2}
\begin{split}
{(\partial_{t_{1,k,\beta}}\L)}_-&=(\M  C_k^{(\beta,r)})_+(P^{(r)})\hat \d^{-1}(Q^{(r)})-P^{(r)}\hat \d^{-1}(\M  C_k^{(\beta,r)})_+^*(Q^{(r)})+T_{k-1}^{(\beta,r)}(P^{(r)})\hat \d^{-1}Q^{(r)}\\
&-P^{(r)}\hat \d^{-1}(T_{k-1}^{(\beta,r)})^*(Q^{(r)})
+\frac{k}{2}P^{(r)}\hat \d^{-1}(T_{k-1}^{(\beta,r)})^*(Q^{(r)})+\frac{k}{2}C_k^{(\beta,r)}(P^{(r)})\hat \d^{-1}Q^{(r)}.
\end{split}
\end{eqnarray}

On the other side,
\begin{equation}\label{tkl}
{(\partial_{t_{1,k,\beta}}\L)}_-=(\partial_{t_{1,k,\beta}}P^{(r)})\hat \d^{-1}Q^{(r)}+P^{(r)}\hat \d^{-1}{(\partial_{t_{1,k,\beta}}Q^{(r)})}.
\end{equation}
Comparing right hand sides of (\ref{tkflow2}) and (\ref{tkl})
implies  the action of additional flows on the eigenfunction and the
adjoint eigenfunction (\ref{BAfunction}).
\end{proof}

Next we shall prove the commutation relation between the additional
flows $\partial_{t_{1,k,\beta}}$ of  constrained $N$-component KP hierarchy and the original flows $\partial_{t_{l,\alpha}}$ of
it.
\begin{theorem}
The additional flows of $\partial_{t_{1,k,\beta}}$ are  symmetry flows of the  constrained $N$-component KP hierarchy, i.e. they commute with all $\partial_{t_{l,\alpha}}$ flows of the   constrained $N$-component KP hierarchy.
\end{theorem}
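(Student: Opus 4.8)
The plan is to prove the commutation relation
$[\partial_{t_{1,k,\beta}}, \partial_{t_{l,\alpha}}] = 0$
by showing that both mixed partial derivatives produce the same action when applied to the Lax operator $\L$ (and, where needed, to the Orlov--Shulman operator $\M$), and then invoking the fact that a flow is determined by its action on these operators. Concretely, I would compute each of the two compositions $\partial_{t_{1,k,\beta}}(\partial_{t_{l,\alpha}}\L)$ and $\partial_{t_{l,\alpha}}(\partial_{t_{1,k,\beta}}\L)$ by substituting the defining Lax-pair formulas and then subtract. Since $\partial_{t_{l,\alpha}}\L = [B_l^{(\alpha,r)}, \L]$ and $\partial_{t_{1,k,\beta}}\L = [-(\M C_k^{(\beta,r)})_- + T_{k-1}^{(\beta,r)}, \L]$, the commutator of the two flows acting on $\L$ reduces, by the Jacobi identity, to
\begin{equation*}
[\partial_{t_{1,k,\beta}}, \partial_{t_{l,\alpha}}]\L = [\,\partial_{t_{1,k,\beta}}B_l^{(\alpha,r)} - \partial_{t_{l,\alpha}}(-(\M C_k^{(\beta,r)})_- + T_{k-1}^{(\beta,r)}) + [B_l^{(\alpha,r)}, -(\M C_k^{(\beta,r)})_- + T_{k-1}^{(\beta,r)}],\ \L\,].
\end{equation*}
So it suffices to show that the operator appearing in the first slot of this commutator is itself of the form $[\,\cdot\,, \L]$-trivial, i.e.\ that the bracket vanishes; equivalently, one shows the inner operator expression vanishes modulo terms that commute with $\L$ in the constrained sense.

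The key computational steps, carried out in order, would be: first, evaluate $\partial_{t_{l,\alpha}}(\M C_k^{(\beta,r)})_-$ using the isospectral evolution $\partial_{t_{l,\alpha}}(\M^m\L^l) = [B_l^{(\alpha,r)}, \M^m\L^l]$ together with the evolution of $C_n^{(\beta,r)}$, and split the projection $(\,\cdot\,)_-$ carefully, keeping track of the mismatch between $[B_l^{(\alpha,r)}, (\M C_k^{(\beta,r)})_-]_-$ and $[B_l^{(\alpha,r)}, (\M C_k^{(\beta,r)})]_-$; second, invoke Lemma for $\partial_{t_{l,\beta}}T_k^{(\alpha,r)} = [B_l^{(\beta,r)}, T_k^{(\alpha,r)}]_-$ (equation \eqref{ykderivat}) to handle the evolution of the correction operator $T_{k-1}^{(\beta,r)}$; third, combine these with $\partial_{t_{1,k,\beta}}B_l^{(\alpha,r)}$, which requires computing how $B_l^{(\alpha,r)} = (C_l^{(\alpha,r)})_+$ responds to the additional flow via its dependence on the dressing operator $S^{(r)}$ and hence on $P^{(r)}, Q^{(r)}$ through Theorem \ref{symmetre}. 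The projection operators $(\,\cdot\,)_+$ and $(\,\cdot\,)_-$ do not commute with bracketing, so the heart of the argument is organizing the $\pm$-decompositions so that the differential (positive) and integral (negative) parts cancel separately.

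I expect the main obstacle to be precisely this bookkeeping of the projections: the correction term $T_{k-1}^{(\beta,r)}$ was introduced exactly to repair the incompatibility between the naive additional flow and the constrained form $\L = \sum_l d_l^s B_s^{(l,r)} + P^{(r)}\hat\d^{-1}Q^{(r)}$, so one must verify that the $T$-contributions cancel the anomalous negative-order terms generated when $\partial_{t_{l,\alpha}}$ is pushed through the projection $(\M C_k^{(\beta,r)})_-$. Here Lemma \ref{lemm} and equation \eqref{operatorXL}, which express $[X,\L]_-$ for $X = M\hat\d^{-1}N$ entirely in terms of $P^{(r)}, Q^{(r)}$ and the actions of $\L, \L^*$, are the indispensable tools: they let me reduce every purely-negative-order discrepancy to an expression in $P^{(r)}, Q^{(r)}$ that I can then match against the flow actions in Theorem \ref{symmetre}. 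Once the inner operator is shown to vanish (or to commute with $\L$), the commutator $[\partial_{t_{1,k,\beta}}, \partial_{t_{l,\alpha}}]\L = 0$ follows, and the analogous check on $\M$ via \eqref{MLK} together with $[\L,\M]=E$ completes the proof that the additional flows are genuine symmetries.
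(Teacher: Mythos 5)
Your proposal is sound and does close, but it takes a genuinely different route from the paper. The paper never touches the Lax operator in this proof: it computes $[\partial_{t_{1,k,\beta}},\partial_{t_{l,\alpha}}]S$ directly on the dressing operator, using the two Sato equations $\partial_{t_{l,\alpha}}S=-(C_l^{(\alpha,r)})_-S$ and $\partial_{t_{1,k,\beta}}S=\bigl(-(\M C_k^{(\beta,r)})_-+T_{k-1}^{(\beta,r)}\bigr)S$; after the $\pm$ bookkeeping everything collapses to $[B_l^{(\alpha,r)},T_{k-1}^{(\beta,r)}]_-S-(\partial_{t_{l,\alpha}}T_{k-1}^{(\beta,r)})S$, which vanishes by eq.~\eqref{ykderivat}. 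In your Lax-side computation the operator in the first slot, $Z=\partial_{t_{1,k,\beta}}B_l^{(\alpha,r)}-\partial_{t_{l,\alpha}}\bigl(-(\M C_k^{(\beta,r)})_-+T_{k-1}^{(\beta,r)}\bigr)+[B_l^{(\alpha,r)},-(\M C_k^{(\beta,r)})_-+T_{k-1}^{(\beta,r)}]$, in fact vanishes identically, not merely modulo operators commuting with $\L$: using $\partial_{t_{l,\alpha}}(\M C_k^{(\beta,r)})=[B_l^{(\alpha,r)},\M C_k^{(\beta,r)}]$, eq.~\eqref{ykderivat}, and $\partial_{t_{1,k,\beta}}C_l^{(\alpha,r)}=[-(\M C_k^{(\beta,r)})_-+T_{k-1}^{(\beta,r)},C_l^{(\alpha,r)}]$, the $T$-terms cancel among themselves, and what remains reduces to $([B_l^{(\alpha,r)},(\M C_k^{(\beta,r)})_+])_-=0$ since both entries are differential operators. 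Two refinements to your plan: the evolution $\partial_{t_{1,k,\beta}}B_l^{(\alpha,r)}$ should be obtained from the Sato equation for the additional flow (because $B_l^{(\alpha,r)}$ is built from the whole dressing operator), not routed through Theorem \ref{symmetre}, which only governs $P^{(r)}$ and $Q^{(r)}$; and Lemma \ref{lemm} together with eq.~\eqref{operatorXL}, which you call indispensable, is actually not needed once the projections are split as above. As for what each approach buys: yours stays at the level of the constrained Lax operator and parallels the paper's own proof of Theorem \ref{alg}, but it only establishes commutativity on $\L$ (and $\M$), so the action on the eigenfunctions would still need a separate check; the paper's computation on $S$ is both shorter (no Jacobi identity, no double commutators) and stronger, since commutativity on the dressing operator propagates automatically to every dressed quantity $\L$, $\M$, $P^{(r)}$, $Q^{(r)}$.
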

\begin{proof}
According the action of  $\partial_{t_{1,k,\beta}}$ and $\partial_{t_{l,\alpha}}$ on the
dressing operator $S$, then
\begin{eqnarray*}
[\partial_{t_{1,k,\beta}},\partial_{t_{l,\alpha}}]S &=& -\partial_{t_{1,k,\beta}}((C_{l}^{(\alpha,r)})_-
S)-\partial_{t_{l,\alpha}}[-(\M C_{k}^{(\beta,r)})_-+T_{k-1}^{(\beta,r)}]S \\
&=&(-\partial_{t_{1,k,\beta}}C_{l}^{(\alpha,r)})_-S-(C_{l}^{(\beta,r)})_-\partial_{t_{1,k,\beta}}S-[(\M
C_{k}^{(\beta,r)})_--T_{k-1}^{(\beta,r)}](C_{l}^{(\alpha,r)})_-S \\
 &&+[(C_{l}^{(\alpha,r)})_+,\M C_{k}^{(\beta,r)}]_-S-(\partial_{t_{l,\alpha}} T_{k-1}^{(\beta,r)})S\\
&=&[(C_{l}^{(\alpha,r)})_-,-T_{k-1}^{(\beta,r)}]_-S+[-T_{k-1}^{(\beta,r)},C_{l}^{(\alpha,r)}]_-S-(\partial_{t_{l,\alpha}} T_{k-1}^{(\beta,r)})S \\
&=&[B_{l}^{(\alpha,r)},T_{k-1}^{(\beta,r)}]_-S-(\partial_{t_{1,\alpha}} T_{k-1}^{(\beta,r)})S\\
&=&0.
\end{eqnarray*}
Therefore the theorem holds.
\end{proof}

Taking into account $T_{k-1}^{(\beta,r)}=0$  for $k=0,1,2 $, then
eq.\eqref{BAfunction} becomes
\begin{eqnarray}\label{PLqr}
\begin{split}
\partial_{t_{1,l,\beta}}P^{(r)}&=(\M
C_{l}^{(\beta,r)})_+(P^{(r)})+\frac{1}{2}l C_{l-1}^{(\beta,r)} P^{(r)},\ \ l=0,1,2,\\
\partial_{t_{1,l,\beta}}Q^{(r)}&=-(\M
C_{l}^{(\beta,r)})^*_+(Q^{(r)})+\frac{1}{2}l {C_{l-1}^{(\beta,r)}}^* Q^{(r)}, \ \ l=0,1,2.
\end{split}
\end{eqnarray}

Then  using eq.\eqref{PLqr} and the
relation
${{\partial_{t_{1,l,\beta}}}(\L^k(P^{(r)}))}=({\partial_{t_{1,l,\beta}}}(\L^k))(P^{(r)})+\L^k
{\partial_{t_{1,l,\beta}}}(P^{(r)})$,
we can find
the additional  flows $\partial_{t_{1,l,\beta}}$ of  constrained $N$-component KP hierarchy have the following relations
\begin{eqnarray}\label{lqstar}
\begin{split}
{\partial_{t_{1,l,\beta}} C_{k}^{(\alpha,r)}(P^{(r)})}&=(\M
C_{l}^{(\beta,r)})_+(C_{k}^{(\alpha,r)}(P^{(r)}))+(k+\frac{l}{2})C_{k+l-1}^{(\alpha,r)}\delta_{\alpha,\beta}(P^{(r)})+T_{l-1}^{(\beta,r)}C_{k}^{(\alpha,r)}(P^{(r)}),\\
{\partial_{t_{1,l,\beta}} {C_{k}^{(\alpha,r)}}^*(Q^{(r)})}&=-(\M
C_{l}^{(\beta,r)})^*_+{C_{k}^{(\alpha,r)}}^*(Q^{(r)})+(k+\frac{l}{2}){C_{k+l-1}^{(\alpha,r)}}^*\delta_{\alpha,\beta}(Q^{(r)})
+(T_{l-1}^{(\beta,r)}C_{k}^{(\alpha,r)})^*(P^{(r)}).
\end{split}
\end{eqnarray}

Moreover, the action of
 $\partial_{t_{1,l,\beta}}$ on $T_k^{(\alpha,r)}$ is given by the following lemma.
\begin{lemma}
The actions on $T_k^{(\alpha,r)}$ of the additional  symmetry flows
$\partial_{t_{1,l,\beta}}$ of the  constrained $N$-component KP hierarchy are
\begin{eqnarray}\label{PL}
\partial_{t_{1,l,\beta}} T_k^{(\alpha,r)}=[(\M C_l^{\beta,r})_++T_{l-1}^{(\beta,r)},T_k^{(\alpha,r)}]_-+(k-l+1)T_{k+l-1}^{(\alpha,r)}\delta_{\alpha,\beta}.
\end{eqnarray}
\end{lemma}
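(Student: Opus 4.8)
The plan is to differentiate the defining formula \eqref{y} of $T_k^{(\alpha,r)}$ term by term and substitute the already established evolution laws \eqref{lqstar} for its building blocks $C_{k-1-j}^{(\alpha,r)}(P^{(r)})$ and $C_j^{(\alpha,r)*}(Q^{(r)})$. Writing $a_j=j-\frac12(k-1)$ and abbreviating $M_j=C_{k-1-j}^{(\alpha,r)}(P^{(r)})$, $N_j=C_j^{(\alpha,r)*}(Q^{(r)})$, the Leibniz rule gives $\partial_{t_{1,l,\beta}}T_k^{(\alpha,r)}=\sum_{j=0}^{k-1}a_j\big[(\partial_{t_{1,l,\beta}}M_j)\hat\d^{-1}N_j+M_j\hat\d^{-1}(\partial_{t_{1,l,\beta}}N_j)\big]$. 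Each derivative is then replaced by its value from \eqref{lqstar}, which splits it into three pieces: an $(\M C_l^{(\beta,r)})_+$-piece, a $T_{l-1}^{(\beta,r)}$-piece, and an index-shifting piece carrying the factor $\delta_{\alpha,\beta}$.

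First I would collect the $(\M C_l^{(\beta,r)})_+$-pieces. Since $(\M C_l^{(\beta,r)})_+$ is a purely differential operator, Lemma \ref{lemm} applies to each summand and gives $[(\M C_l^{(\beta,r)})_+,M_j\hat\d^{-1}N_j]_-=(\M C_l^{(\beta,r)})_+(M_j)\hat\d^{-1}N_j-M_j\hat\d^{-1}(\M C_l^{(\beta,r)})_+^*(N_j)$, which matches exactly the two contributions coming from the first terms of the $P^{(r)}$- and $Q^{(r)}$-equations in \eqref{lqstar}. Summing over $j$ reconstructs $[(\M C_l^{(\beta,r)})_+,T_k^{(\alpha,r)}]_-$. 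Next I would treat the index-shifting pieces, namely $\sum_j a_j(k-1-j+\tfrac l2)C_{k+l-2-j}^{(\alpha,r)}(P^{(r)})\hat\d^{-1}N_j\,\delta_{\alpha,\beta}$ together with $\sum_j a_j(j+\tfrac l2)M_j\hat\d^{-1}C_{j+l-1}^{(\alpha,r)*}(Q^{(r)})\,\delta_{\alpha,\beta}$. Both families are of the shape occurring in $T_{k+l-1}^{(\alpha,r)}=\sum_m[m-\tfrac12(k+l-2)]C_{k+l-2-m}^{(\alpha,r)}(P^{(r)})\hat\d^{-1}C_m^{(\alpha,r)*}(Q^{(r)})$; reindexing the first sum by $m=j$ and the second by $m=j+l-1$ and adding coefficients, a short algebraic identity shows the coefficient of the $m$-th monomial equals $(k-l+1)[m-\tfrac12(k+l-2)]$, so these pieces assemble precisely into the anomalous term $(k-l+1)T_{k+l-1}^{(\alpha,r)}\delta_{\alpha,\beta}$ of \eqref{PL}.

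The remaining, and genuinely delicate, pieces are those generated by $T_{l-1}^{(\beta,r)}$: the $P^{(r)}$-equation contributes $T_{l-1}^{(\beta,r)}(M_j)\hat\d^{-1}N_j$ and the $Q^{(r)}$-equation contributes $M_j\hat\d^{-1}(T_{l-1}^{(\beta,r)}C_j^{(\alpha,r)})^*(Q^{(r)})$. The hard part will be to show that, summed over $j$, these reproduce exactly the negative part $[T_{l-1}^{(\beta,r)},T_k^{(\alpha,r)}]_-$ of the commutator. Because $T_{l-1}^{(\beta,r)}$ is no longer differential but a pseudo-differential operator of the smoothing type $\sum\Phi\hat\d^{-1}\Psi$, the convenient identity of Lemma \ref{lemm} is not available verbatim, and one must expand the product of two $\Phi\hat\d^{-1}\Psi$-type operators while carefully tracking both the adjoints and the pseudo-differential orders, using the bilinear identity \eqref{operatorXL}. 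Reconciling the ordering $(T_{l-1}^{(\beta,r)}C_j^{(\alpha,r)})^*=C_j^{(\alpha,r)*}(T_{l-1}^{(\beta,r)})^*$ with the commutator expansion is where essentially all the computational content lies. Once this is accomplished, the three groups combine into $[(\M C_l^{(\beta,r)})_++T_{l-1}^{(\beta,r)},T_k^{(\alpha,r)}]_-+(k-l+1)T_{k+l-1}^{(\alpha,r)}\delta_{\alpha,\beta}$, which is exactly \eqref{PL}.
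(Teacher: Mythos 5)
Your proposal follows essentially the same route as the paper's own proof: the paper likewise differentiates the defining sum \eqref{y} by the Leibniz rule, substitutes the flow equations \eqref{lqstar} for $C_{k-1-j}^{(\alpha,r)}(P^{(r)})$ and $C_{j}^{(\alpha,r)*}(Q^{(r)})$, and then states that the resulting groups of terms ``can be simplified'' to \eqref{PL}. You in fact supply more detail than the paper (the reindexing identity producing the coefficient $(k-l+1)[m-\tfrac12(k+l-2)]$ for the anomalous term), and the one step you flag as open --- recombining the $T_{l-1}^{(\beta,r)}$-pieces into $[T_{l-1}^{(\beta,r)},T_k^{(\alpha,r)}]_-$ --- is equally left implicit in the paper and is vacuous in the only cases where the lemma is invoked (Theorem \ref{alg} uses $l=0,1,2$, for which $T_{l-1}^{(\beta,r)}=0$ by \eqref{y12}).
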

\begin{proof} Using eq.\eqref{lqstar}, a straightforward calculation implies
\begin{eqnarray}\label{partialyk}
\begin{split}
\partial_{t_{1,l,\beta}} T_k^{(\alpha,r)}&=\partial_{t_{1,l,\beta}} (\sum_{j=0}^{k-1}[j-\frac{1}{2}(k-1)]C_{k-1-j}^{(\alpha,r)}(P^{(r)})\hat \d^{-1}(C_{j}^{(\alpha,r)*})(Q^{(r)}))\\
&=\sum_{j=0}^{k-1}[j-\frac{1}{2}(k-1)](\partial_{t_{1,l,\beta}}(C_{k-1-j}^{(\alpha,r)}(P^{(r)}))\hat \d^{-1}(C_{j}^{(\alpha,r)*})(Q^{(r)})\\
&+C_{k-1-j}^{(\alpha,r)}(P^{(r)})\hat \d^{-1}(\partial_{t_{1,l,\beta}}(C_{j}^{(\alpha,r)*})(Q^{(r)})))\\
&=\sum_{j=0}^{k-1}[j-\frac{1}{2}(k-1)][(\M C_l^{\beta,r})_++T_{l-1}^{(\beta,r)}](C_{k-1-j}^{(\alpha,r)}(P^{(r)}))\hat \d^{-1}(C_l^{\beta,r})^*(Q^{(r)})\\
&+\sum_{j=0}^{k-1}[j-\frac{1}{2}(k-1)](k-j-1+\frac{l}{2})C_{k+l-2-j}^{\alpha,r}(P^{(r)})\hat \d^{-1}(C_j^{\beta,r})^*(Q^{(r)})\delta_{\alpha,\beta}\\
&-\sum_{j=0}^{k-1}[j-\frac{1}{2}(k-1)]C_{k-1-j}^{(\alpha,r)}(P^{(r)})\hat \d^{-1}[(\M C_l^{\beta,r})^*_++(T_{l-1}^{(\beta,r)})^*](C_j^{\beta,r})^*(Q^{(r)})\\
&+\sum_{j=0}^{k-1}[j-\frac{1}{2}(k-1)](j+\frac{l}{2})C_{k-1-j}^{(\alpha,r)}(P^{(r)})\hat \d^{-1}(C_{j+l-1}^{\beta,r})^*(Q^{(r)})\delta_{\alpha,\beta},
\end{split}
\end{eqnarray}
which can be simplified  to eq.\eqref{PL}.
\end{proof}

Now it is time to identity the algebraic structure of the
additional symmetry flows of the  constrained $N$-component KP hierarchy.
\begin{theorem}\label{alg}
The additional flows $\partial_{t_{1,k,\beta}}$ of the  constrained $N$-component KP hierarchy form the
positive half of Virasoro algebra, i.e., for $l=0,1,2; k\geq 0,1\leq \alpha,\beta\leq r,$
\begin{equation}
[\partial_{t_{1,l,\alpha}},\partial_{t_{1,k,\beta}}]=(k-l)\delta_{\alpha,\beta}\partial_{t_{1,k+l-1,\alpha}}.
\end{equation}
\end{theorem}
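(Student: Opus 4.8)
The plan is to compute the commutator $[\partial_{t_{1,l,\alpha}},\partial_{t_{1,k,\beta}}]$ acting on the dressing operator $S$, since everything (the Lax operator $\L$, the Orlov--Shulman operator $\M$, and the eigenfunctions) is built from $S$ and it suffices to verify the Virasoro relation there. Writing the additional flow generators as $A_k^{(\beta,r)}:=-(\M C_k^{(\beta,r)})_-+T_{k-1}^{(\beta,r)}$, the flow is $\partial_{t_{1,k,\beta}}S=A_k^{(\beta,r)}S$, so the standard computation gives
\begin{equation*}
[\partial_{t_{1,l,\alpha}},\partial_{t_{1,k,\beta}}]S=\bigl(\partial_{t_{1,l,\alpha}}A_k^{(\beta,r)}-\partial_{t_{1,k,\beta}}A_l^{(\alpha,r)}+[A_l^{(\alpha,r)},A_k^{(\beta,r)}]\bigr)S.
\end{equation*}
So the whole theorem reduces to showing that the operator coefficient in the parentheses equals $(k-l)\delta_{\alpha,\beta}A_{k+l-1}^{(\alpha,r)}$.

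First I would assemble the two ingredients already provided by the preceding lemmas. The action of an additional flow on $\M C_k^{(\beta,r)}$ follows from eq.\eqref{MLK} (applied in the appropriate $m=1$ form) together with the isospectral-type covariance; the action on $T_k^{(\alpha,r)}$ is exactly eq.\eqref{PL}. Substituting these into $\partial_{t_{1,l,\alpha}}A_k^{(\beta,r)}$ and the symmetric term, I expect the pieces to organize into (i) a genuine commutator $[(\M C_l^{(\alpha,r)})_+ +T_{l-1}^{(\alpha,r)},\,-(\M C_k^{(\beta,r)})_- +T_{k-1}^{(\beta,r)}]$ type expression, and (ii) the anomalous ``$(k-l+1)T$'' and ``$(k+\tfrac{l}{2})C$'' terms carrying the Kronecker delta $\delta_{\alpha,\beta}$. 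The combinatorial heart is that the half-integer shifts $j-\tfrac12(k-1)$ built into the definition \eqref{y} of $T_k^{(\alpha,r)}$ are precisely tuned so that the numerical coefficients collapse to the single Virasoro structure constant $(k-l)$.

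The key step is reconciling the projection pieces: $A_k^{(\beta,r)}$ mixes a negative-order part $-(\M C_k^{(\beta,r)})_-$ with a (generally) non-local $T_{k-1}^{(\beta,r)}$, and the flows only commute with the hierarchy flows, so I must track carefully how $\partial_{t_{1,l,\alpha}}$ hits the projection ``$-$''. Here I would use the canonical relation $[\L,\M]=E$ to handle $\partial_{t_{1,l,\alpha}}(\M C_k^{(\beta,r)})$, since differentiating $\M$ by \eqref{MLK} produces $[A_l^{(\alpha,r)},\M]$ terms that combine with the commutator $[A_l^{(\alpha,r)},-(\M C_k^{(\beta,r)})_-]$; the commutator $[\M C_l, \M C_k]=(k-l)\M C_{k+l-1}\delta_{\alpha,\beta}$ (coming from $[\L,\M]=E$ and $C_l^{(\alpha,r)}C_k^{(\beta,r)}=\delta_{\alpha,\beta}C_{k+l-1}^{(\alpha,r)}$) is what supplies the leading $(k-l)$ factor. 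The main obstacle will be bookkeeping the negative/positive projections so that the anomalous terms from eq.\eqref{PL} and the $\frac{l}{2},\frac{k}{2}$ shift terms cancel against the half-integer-weighted contributions of $T$, leaving exactly $(k-l)\delta_{\alpha,\beta}A_{k+l-1}^{(\alpha,r)}$.

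Finally, I would verify the result is consistent with the restriction $l=0,1,2$ stated in the theorem: for these values $T_{l-1}^{(\alpha,r)}=0$, so $A_l^{(\alpha,r)}=-(\M C_l^{(\alpha,r)})_-$ and the computation of $\partial_{t_{1,l,\alpha}}A_k^{(\beta,r)}$ simplifies considerably, with the $W_\infty$ anomaly truncating to the pure Virasoro term. Since the $T$-terms vanish on the $l$-side, the only genuinely nonlocal contributions come from $T_{k-1}^{(\beta,r)}$, and eq.\eqref{PL} with $l\in\{0,1,2\}$ directly yields the coefficient $(k-l+1)$ on $T_{k+l-1}$; combined with the $C$-commutator this reassembles into $(k-l)\delta_{\alpha,\beta}\,A_{k+l-1}^{(\alpha,r)}$, completing the identification with the positive half of the Virasoro algebra.
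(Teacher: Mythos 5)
Your proposal is correct and follows essentially the same route as the paper: the paper's proof is just the Lax-operator form of your dressing-operator computation (it expands $[\partial_{t_{1,l,\alpha}},\partial_{t_{1,k,\beta}}]\L$ and reassembles the double commutators via the Jacobi identity), and it uses exactly the ingredients you name --- eq.~\eqref{PL} for $\partial_{t_{1,l,\alpha}}T_{k-1}^{(\beta,r)}$, the truncation $T_{l-1}^{(\alpha,r)}=0$ for $l=0,1,2$, and the $\M C$-commutator supplying the factor $(k-l)$. Two bookkeeping slips to fix when you carry it out: with $\partial_{t_{1,k,\beta}}S=A_k^{(\beta,r)}S$ the zero-curvature coefficient is $\partial_{t_{1,l,\alpha}}A_k^{(\beta,r)}-\partial_{t_{1,k,\beta}}A_l^{(\alpha,r)}+[A_k^{(\beta,r)},A_l^{(\alpha,r)}]$ (your commutator is in the opposite order, which would flip the sign of the structure constant), and the product rule is $C_l^{(\alpha,r)}C_k^{(\beta,r)}=\delta_{\alpha\beta}C_{k+l}^{(\alpha,r)}$, the shift down to $k+l-1$ entering only through $[\L,\M]=E$.
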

\begin{proof}
Using $[(\M C_{l}^{(\alpha,r)})_-,T_{k-1}^{(\beta,r)}]_-=[(\M C_{l}^{(\alpha,r)})_-,T_{k-1}^{(\beta,r)}]$
and the Jacobi identity, one can derive the following computation

\begin{eqnarray*}
&&[\partial_{t_{1,l,\alpha}},\partial_{t_{1,k,\beta}}]\L\\
&=&\partial_{t_{1,l,\alpha}}([-(\M C_{k}^{(\beta,r)})_-,\L]+[T_{k-1}^{\beta,r},\L])-\partial_{t_{1,k,\beta}}([-(\M
 C_l^{\alpha,r})_-,\L]+[T_{l-1}^{\alpha,r},\L])) \\
&=&\partial_{t_{1,l,\alpha}} [-(\M C_{k}^{(\beta,r)})_-,\L] +[\partial_{t_{1,l,\alpha}} T_{k-1}^{\beta,r},\L]+[T_{k-1}^{\beta,r},\partial_{t_{1,l,\alpha}} \L]+[\partial_{t_{1,k,\beta}}(\M
 C_l^{\alpha,r})_-,\L]\\
&&+[(\M
 C_l^{\alpha,r})_-,\partial_{t_{1,k,\beta}} \L]-[\partial_{t_{1,k,\beta}} T_{l-1}^{\alpha,r},\L]-[T_{l-1}^{\alpha,r},\partial_{t_{1,k,\beta}} \L]\\
&=&[[(\M C_{l}^{(\alpha,r)})_-,\M C_k^{\beta,r}]_-,\L] +[-(\M C_{k}^{(\beta,r)})_-,[-(\M C_l^{(\alpha,r)})_-
,\L]] \\
&&+[[(\M C_l^{\alpha,r})_+,T_{k-1}^{\beta,r}]_-+(k-l)T_{k+l-2}^{(\beta,r)}\delta_{\alpha,\beta},\L]\\
&&+[T_{k-1}^{\beta,r},[-(\M C_l^{\alpha,r})_-,\L]]
+[[-(\M C_{k}^{(\beta,r)})_-+T_{k-1}^{\beta,r},\M C_l^{\alpha,r}]_-,\L]\\
&&+[(\M
C_l^{\alpha,r})_-,[-(\M C_{k}^{(\beta,r)})_-+T_{k-1}^{\beta,r},\L]]\\
&=&[[(\M C_{l}^{(\alpha,r)})_-,\M C_k^{\beta,r}]_-,\L] +[-(\M C_{k}^{(\beta,r)})_-,[-(\M C_l^{(\alpha,r)})_-
,\L]] \\
&&+[(k-l)T_{k+l-2}^{(\beta,r)}\delta_{\alpha,\beta},\L]+[[-(\M C_{k}^{(\beta,r)})_-,\M C_l^{\alpha,r}]_-,\L]\\
&&+[(\M
C_l^{\alpha,r})_-,[-(\M C_{k}^{(\beta,r)})_-,\L]]\\
&=&(k-l)[-\delta_{\alpha,\beta}(\M C_{l+k-1}^{(\alpha,r)})_-,\L] +[(k-l)T_{k+l-2}^{(\beta,r)}\delta_{\alpha,\beta},\L]\\
&=&(k-l)\delta_{\alpha,\beta}\partial_{t_{1,k+l-1,\alpha}}\L.
\end{eqnarray*}
\end{proof}

We define
$L_{k-1,\alpha}:=\partial_{t_{1,k,\alpha}},$ and obtain recursively all higher Virasoro operators as
\[L_{k+1,\alpha}\delta_{\alpha,\beta}=-\frac{1}{k-1}[L_{k,\alpha},L_{1,\beta}],\ \ k\geq 2.\] By induction as
\[[L_{m,\alpha},L_{k+1,\beta}]=\frac{m-1}{k-1}[L_{k,\beta},L_{m+1,\alpha}]+\frac{k-m}{k-1}[L_{1,\alpha},L_{m+k,\beta}],\]
the obtained higher Virasoro operators in the above recursive way form a closed matrix-formed Virasoro algebra up to irrelevant terms containing flows which commute with the $sl(2)$ additional symmetry generators $\{L_{-1,\alpha},L_{0,\alpha},L_{1,\alpha};\ \ 1\leq \alpha\leq r\}$.

\section{ The discrete constrained multi-component KP hierarchy}
At the beginning  of the this section, to make this paper
be self-contained, let us firstly recall   some useful
facts of discrete calculus.

For any function $f(n)$, the discrete differential $\Delta$ is defined by
\begin{equation}
   \Delta(f(n))=f(n+1)-f(n), \label{q-derivative}
   \end{equation}
and the discrete shift operator $\Lambda$ is defined by
$
  \Lambda(f(n))=f(n+1).$
 We denote the formal inverse of
$\Delta$ as $\Delta^{-1}$. The following discrete deformed
Leibnitz rule holds
\begin{equation}
     \Delta^n \circ f=\sum_{k\ge0}\binom{n}{k}\Lambda^{n-k}(\Delta^kf)\Delta^{n-k},\qquad n\in \Z.
     \end{equation}

 For any two discrete pseudo-differential operators $W=\sum\limits_ip_i(n)\Delta^i$, the conjugate operation ``$*$''
for $W$ is defined by $W^*=\sum\limits_i(\Delta^*)^ip_i(n)$ with
\[&&\Delta^*=-\Delta\Lambda^{-1},\ \ (\Delta^{-1})^*=(\Delta^*)^{-1}=-\Lambda
\Delta^{-1}.\]

Now we prove some useful properties for the operators which are used later. \\
{\sl {\bf  Lemma 2.1}}  For arbitrary function $f$ and $\Delta$, $\Lambda$ as
above, the following identities hold.
\begin{align}
&(1)\quad \Delta\circ\Lambda=\Lambda\circ\Delta,  \\
&(2)\quad \Delta^{-1}\circ
f\circ\Delta^{-1}=(\Delta^{-1}f)\circ\Delta^{-1}-\Delta^{-1}\circ
\Lambda(\Delta^{-1} f),\label{85}\\
&(3)\quad \Delta^{-1}\circ
f=\Lambda^{-1}(f)\circ\Delta^{-1}+\Delta^{-1}\circ
(\Delta^* f)\circ\Delta^{-1}.\label{86}
\end{align}
\begin{proof}
The proof is standard and direct. We omit it here.
\end{proof}
Because of the above lemma, for arbitrary function $f_1,g_1,f_2,g_2$, the following formula can be derived
\begin{equation}
\label{Drelation}f_1\Delta^{-1}g_1\circ f_2\Delta^{-1}g_2=f_1\Delta^{-1}(g_1,f_2)\Delta^{-1}g_2-f_1\Delta^{-1}\circ \Lambda(\Delta^{-1}(g_1,f_2))g_2.\end{equation}

\begin{proposition}\label{Dformula}
The following identities hold:
\[
res((\Delta A)(n))&=&res(\Delta \circ A(n)-A(n+1)\circ \Delta),\\
\label{Dleq0}
P_{< 0}(\Delta^{-1}A)
&=& \Delta^{-1} P_{< 0}(A)+ \Delta^{-1}P_{ 0}(A^*),\\
\label{Dresp0}
res(\Delta^{-1} A)&=&(\Lambda^{-1}P_0(A^*)),
\]
where $(\Delta A)$ denotes the action of $\Delta $ on operator $A$, $(\Lambda^{-1}P_0(A^*))$ means a backward shift of function $P_0(A^*)$(zero order term of operator $A^*$ on $\Delta$) on $x$.
\end{proposition}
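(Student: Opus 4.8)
The plan is to treat all three identities as coefficient computations for discrete pseudo-differential operators: by linearity it suffices to verify each one on a single monomial $A=a(n)\Delta^i$, after which everything follows from the deformed Leibniz rule and Lemma~2.1. The one auxiliary fact I would isolate at the outset is the iterated form of \eqref{86}, namely $\Delta^{-1}\circ a=\sum_{k\ge 0}\Lambda^{-1}((\Delta^*)^k a)\,\Delta^{-1-k}$, obtained by applying \eqref{86} repeatedly and pushing the leftover $\Delta^{-1}$ one step further to the right each time. This single expansion is what drives the second and third identities.

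For the first identity I would first prove the stronger operator equality $(\Delta A)=\Delta\circ A(n)-A(n+1)\circ\Delta$, from which the stated residue identity is immediate. On a monomial this is direct: since $\Delta\circ a=\Lambda(a)\Delta+(\Delta a)$ and $A(n+1)=\sum_i\Lambda(a_i)\Delta^i$, the top term $\Lambda(a)\Delta^{i+1}$ produced by $\Delta\circ a\Delta^i$ cancels exactly against $\Lambda(a)\Delta^i\circ\Delta$, leaving $(\Delta a)\Delta^i$ term by term; applying $res$ then gives the claim.

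For the third identity I would substitute the monomial into the iterated expansion, getting $\Delta^{-1}\circ a_i\Delta^i=\sum_{k\ge 0}\Lambda^{-1}((\Delta^*)^k a_i)\,\Delta^{\,i-1-k}$, so that the coefficient of $\Delta^{-1}$ survives only when $k=i$ (which forces $i\ge 0$). This yields $res(\Delta^{-1}A)=\sum_{i\ge 0}\Lambda^{-1}((\Delta^*)^i a_i)=\Lambda^{-1}\big(\sum_{i\ge 0}(\Delta^*)^i a_i\big)$, and it remains only to recognize the bracketed sum as $P_0(A^*)$. That recognition is the key computational lemma: the zero-order term of $A^*=\sum_i(\Delta^*)^i a_i$ equals the function $\sum_{i\ge 0}(\Delta^*)^i(a_i)$, which I would check by confirming that the constant term of each operator $(\Delta^*)^i\circ a_i$ is the function $(\Delta^*)^i(a_i)$, using $(\Delta^*)^i=(-1)^i\Delta^i\Lambda^{-i}$ and the observation that the $\Delta^0$ contribution can come only from the lowest-order piece.

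For the second identity I would split $A=P_{<0}(A)+P_{\ge 0}(A)$. The negative part contributes $\Delta^{-1}P_{<0}(A)$ outright, since composing $\Delta^{-1}$ with an operator of order $\le -1$ already produces only negative powers. For the non-negative part, reindexing the iterated expansion by $m=k-i$ gives $P_{<0}(\Delta^{-1}P_{\ge 0}(A))=\sum_{m\ge 0}\Lambda^{-1}((\Delta^*)^m P_0(A^*))\,\Delta^{-1-m}$, which is exactly $\Delta^{-1}\circ P_0(A^*)$ read off from the same expansion; adding the two pieces gives \eqref{Dleq0}. The hard part throughout is the bookkeeping in these infinite $\Delta$-expansions, concentrated in the auxiliary lemma $P_0(A^*)=\sum_{i\ge 0}(\Delta^*)^i(a_i)$ that underlies both the second and third identities; once it is established, the remainder is reindexing.
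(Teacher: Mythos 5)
Your reduction to monomials, your proof of the first identity via the operator equality $(\Delta A)=\Delta\circ A(n)-A(n+1)\circ\Delta$, and your iterated expansion $\Delta^{-1}\circ a=\sum_{k\ge 0}\Lambda^{-1}\bigl((\Delta^*)^k a\bigr)\Delta^{-1-k}$ are all correct, and they do yield $res(\Delta^{-1}A)=\Lambda^{-1}\bigl(\sum_{i\ge 0}(\Delta^*)^i(a_i)\bigr)$ together with the analogous reindexed formula for $P_{<0}(\Delta^{-1}A)$. The genuine gap is the step you yourself flag as the crux: the claim that the $\Delta^0$ coefficient of the operator $(\Delta^*)^i\circ a_i$ is the function $(\Delta^*)^i(a_i)$. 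This is false. Already for $i=1$, $a_1\equiv 1$ (so $A=\Delta$) one has $A^*=\Delta^*=-\Delta\Lambda^{-1}=-1+\Delta^{-1}-\Delta^{-2}+\cdots$, whose $\Delta^0$ coefficient is $-1$, whereas $\Delta^*(1)=0$; consistently, $res(\Delta^{-1}\circ\Delta)=0\ne\Lambda^{-1}(-1)$. Your supporting observation fails because $(\Delta^*)^i=(-1)^i\Delta^i\Lambda^{-i}$ is not a monomial in $\Delta$: since $\Lambda^{-i}=(1+\Delta)^{-i}=\Delta^{-i}-i\Delta^{-i-1}+\cdots$, the operator $(\Delta^*)^i$ itself carries the zero-order piece $(-1)^i$, and in fact the $\Delta^0$ coefficient of $(\Delta^*)^i\circ a_i$ equals $(-1)^i a_i$, not $(\Delta^*)^i(a_i)$.

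The resolution is that $P_0(A^*)$ in this proposition cannot mean the zero-order coefficient of $A^*$ expanded in powers of $\Delta$ (the parenthetical in the statement is misleading on this point, and with that literal reading the second and third identities are simply false, as the example $A=\Delta$ shows); it must be the zero-order coefficient of $A^*$ in its natural expansion in powers of $\Delta^*$, which is exactly $\sum_{i\ge 0}(\Delta^*)^i(a_i)$. To prove that identification, do not convert to $\Delta$ at all: note that $\Delta^*=\Lambda^{-1}-1$ is itself a difference operator with Leibniz rule $\Delta^*\circ f=\Lambda^{-1}(f)\,\Delta^*+(\Delta^* f)$, whence by induction $(\Delta^*)^i\circ f=\sum_{k=0}^{i}\binom{i}{k}\Lambda^{-(i-k)}\bigl((\Delta^*)^k f\bigr)(\Delta^*)^{i-k}$, whose zero-order term in $\Delta^*$ is $(\Delta^*)^i(f)$. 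With this corrected lemma, the rest of your argument for the second and third identities goes through verbatim. Note also that the paper offers no proof to compare against here: it simply refers to \cite{ghostdKP}, so this interpretive point, which is the only hard part of the statement, is precisely what needed to be settled.
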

\begin{proof}
One can check the proof in \cite{ghostdKP}.
\end{proof}

After integrable discretization by $u(n)=u(x+n\epsilon)$, one can get similar results as constrained multi-component KP hierarchy for its discrete version.

Using the above proposition, the following proposition can be easily got similarly as \cite{ghostdKP}.
\begin{proposition}
If $\alpha$ and $\beta$ are two local discrete  operators, then
\begin{eqnarray*}res(\Delta^{-1}\alpha \beta \Delta^{-1} )&=&res(\Delta^{-1}P_0(\alpha^*) \beta \Delta^{-1} )+res(\Delta^{-1}\alpha P_0(\beta) \Delta^{-1} ).
\end{eqnarray*}
\end{proposition}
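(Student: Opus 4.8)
The plan is to peel off the two outer $\Delta^{-1}$ factors separately, letting the right one produce the term involving $P_0(\beta)$ and the left one (through the adjoint residue formula) produce the term involving $P_0(\alpha^*)$; this asymmetry is precisely what the statement predicts. First I would split the right factor. Since $\beta$ is local, writing $\beta=\sum_j b_j\Delta^j$ gives $\beta\Delta^{-1}=\sum_{j\ge1}b_j\Delta^{j-1}+P_0(\beta)\Delta^{-1}=B+P_0(\beta)\Delta^{-1}$, where $B:=(\beta\Delta^{-1})_{\ge0}$ is again a local operator and $P_0(\beta)$ is the multiplication operator coming from the $j=0$ term. Composing on the left with $\Delta^{-1}\alpha$ and taking residues then yields $res(\Delta^{-1}\alpha\beta\Delta^{-1})=res(\Delta^{-1}\alpha B)+res(\Delta^{-1}\alpha P_0(\beta)\Delta^{-1})$, and the second summand is already the term $res(\Delta^{-1}\alpha P_0(\beta)\Delta^{-1})$ on the right-hand side of the claim.

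It then remains to identify $res(\Delta^{-1}\alpha B)$ with $res(\Delta^{-1}P_0(\alpha^*)\beta\Delta^{-1})$. Since $\alpha B$ is local, I would apply the adjoint residue identity (\ref{Dresp0}), $res(\Delta^{-1}A)=\Lambda^{-1}P_0(A^*)$, to obtain $res(\Delta^{-1}\alpha B)=\Lambda^{-1}P_0(B^*\alpha^*)$, using that $*$ reverses products. For the target term I would peel the right $\Delta^{-1}$ off $P_0(\alpha^*)\beta\Delta^{-1}$ in the same way; the resulting pure $\Delta^{-1}(\cdot)\Delta^{-1}$ tail has vanishing residue by (\ref{85}), so the term reduces to $res(\Delta^{-1}P_0(\alpha^*)B)=\Lambda^{-1}P_0\big(B^*P_0(\alpha^*)\big)$, because $(P_0(\alpha^*)\beta\Delta^{-1})_{\ge0}=P_0(\alpha^*)B$ and $P_0(\alpha^*)$, being a multiplication operator, is fixed by $*$. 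Hence the whole identity collapses to the single pointwise statement $P_0\big(B^*(\alpha^*-P_0(\alpha^*))\big)=0$.

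The last step is the only real content, and it is where I expect the main (though routine) obstacle to lie: the careful bookkeeping of conjugates and the discrete Leibniz rule. The operator $\alpha^*-P_0(\alpha^*)$ is local with no $\Delta^0$-term, i.e. a sum of monomials $v_k\Delta^k$ with $k\ge1$, while $B^*$ is an honest non-negative-power operator. A direct degree count then shows $P_0(UV)=0$ whenever $U$ is local and $V=\sum_{k\ge1}v_k\Delta^k$: each monomial $u_i\Delta^i\circ v_k\Delta^k$ expands via the Leibniz rule into powers $\Delta^{\,i-l+k}$ with $0\le l\le i$ and $k\ge1$, so the exponent $i-l+k\ge k\ge1$ never reaches $0$. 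This gives $P_0\big(B^*(\alpha^*-P_0(\alpha^*))\big)=0$ and closes the argument. Throughout, the only facts I need beyond linearity are that $*$ reverses products and preserves locality, the commutation $[\Delta,\Lambda]=0$ from Lemma 2.1, and the identities (\ref{85})–(\ref{86}) guaranteeing that the discarded $\Delta^{-1}(\cdot)\Delta^{-1}$ tails contribute nothing to the residue, exactly as in \cite{ghostdKP}.
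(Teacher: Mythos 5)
Your reduction has the right skeleton: splitting $\beta\Delta^{-1}=B+P_0(\beta)\Delta^{-1}$ with $B$ local, using linearity of the residue, killing the $\Delta^{-1}(\cdot)\Delta^{-1}$ term with a function in the middle via \eqref{85}, and converting the two surviving residues with \eqref{Dresp0} is exactly the right way to reduce the claim to a statement about $P_0$. The gap is in your last step, and it is a real one: you assert that $B^*$ is "an honest non-negative-power operator" and that $\alpha^*-P_0(\alpha^*)$ is a sum of monomials $v_k\Delta^k$ with $k\geq 1$, and then run the Leibniz degree count in powers of $\Delta$. With the paper's conjugation this premise is false. Since $\Delta^*=-\Delta\Lambda^{-1}=\Lambda^{-1}-1$ and $\Lambda^{-1}=(1+\Delta)^{-1}=\Delta^{-1}-\Delta^{-2}+\Delta^{-3}-\cdots$, the adjoint of any local operator of positive order carries an infinite tail of negative powers of $\Delta$; the decomposition you need does not exist, so the degree count has nothing to act on. The same misreading undermines your uses of \eqref{Dresp0}: if $P_0(A^*)$ meant the $\Delta^0$-coefficient of $A^*$ in its $\Delta$-expansion (which is what the paper's parenthetical wording suggests, and what your argument tacitly assumes), then \eqref{Dresp0} itself would be false --- for $A=\Delta$ one has $res(\Delta^{-1}\Delta)=0$, while $\Delta^*=-1+\Delta^{-1}-\Delta^{-2}+\cdots$ has $\Delta^0$-coefficient $-1$, so $\Lambda^{-1}P_0(A^*)=-1\neq 0$.

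The repair is to interpret $P_0(A^*)$ as the $(\Delta^*)^0$-coefficient when $A^*$ is written as $\sum_k c_k(\Delta^*)^k$, equivalently $P_0(A^*)=A^*(1)$, the operator applied to the constant function (note $(\Delta^*)^k(1)=0$ for $k\geq 1$); this is the only reading under which \eqref{Dresp0}, and indeed the proposition itself, are true. With it, adjoints of local operators \emph{are} local in the $\Delta^*$-grading, and your degree count goes through verbatim with $\Delta$ replaced by $\Delta^*$, using the Leibniz rule $\Delta^*\circ f=\Lambda^{-1}(f)\Delta^*+(\Delta^*f)$; even more simply, $P_0\bigl(B^*(\alpha^*-P_0(\alpha^*))\bigr)=\bigl[B^*(\alpha^*-P_0(\alpha^*))\bigr](1)=B^*\bigl(\alpha^*(1)-P_0(\alpha^*)\bigr)=B^*(0)=0$, which is exactly the identity you needed. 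Once this is fixed your argument is a complete, self-contained proof --- which is worth having, since the paper supplies no proof at all here but only the citation of \cite{ghostdKP}. One residual caveat specific to this paper's setting: the coefficients are matrices, and the transpose-free $*$ defined here is not an anti-homomorphism on matrix-valued operators, so the identity $(\alpha B)^*=B^*\alpha^*$ that you use twice requires either scalar coefficients (as in \cite{ghostdKP}) or a conjugation that also transposes the matrix coefficients.
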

\begin{proof}
See the proof in \cite{ghostdKP}.
\end{proof}

The Lax operator $L_{\Delta}$ of discrete multicomponent KP hierarchy is given by
\begin{equation}\label{Dqkplaxoperator}
L_{\Delta}=E\Delta+ v_0 +
v_{-1}\Delta^{-1}+v_{-2}\Delta^{-2}+\cdots.
\end{equation}
where $v_i=v_i(n,t_{\alpha, n},\cdots,),i=0,-1,-2, -3, \cdots $.
 are $N\times N$ matrices.

  There are another $N$ pseudo difference operators $R_{\Delta1},\ldots, R_{\Delta N}$ of the form
$$R_{\Delta\alpha}=E_{\alpha}+v_{\alpha 1}\Delta^{-1}+v_{\alpha 2}\Delta^{-2}+\ldots,$$ where $E_{\alpha}$ is the $N\times N$ matrix with $1$ on the $(\alpha,\alpha)$-component and zero elsewhere, and $v_{\alpha 1}, v_{\alpha 2},\ldots$ are also $N\times N$ matrices. The operators $L_{\Delta}, R_{\Delta 1},\ldots, R_{\Delta N}$ satisfy the following conditions:
$$L_{\Delta}R_{\Delta\alpha}=R_{\Delta\alpha} L_{\Delta}, \quad R_{\Delta\alpha}R_{\Delta\beta}=\delta_{\alpha\beta}R_{\Delta\alpha}, \quad \sum_{\alpha=1}^N R_{\Delta\alpha}= E.$$

The Lax equations of the discrete multi-component KP (dmKP) hierarchy are:
\begin{equation*}
\frac{\d L_{\Delta}}{\d t_{\alpha, n}}=[B_{\Delta \alpha, n}, L_{\Delta}],\hspace{1cm}\frac{\d R_{\Delta\beta}}{\d t_{\alpha, n}}=[B_{\Delta \alpha, n}, R_{\Delta\beta}],\hspace{1cm}B_{\Delta \alpha, n}:=(L_{\Delta}^n R_{\Delta\alpha})_+.
\end{equation*}

In fact the Lax operator $L_{\Delta}$ in
eq.(\ref{Dqkplaxoperator}) and $R_{\Delta\alpha}$ can have the following dressing structures
\[L_{\Delta}=S_{\Delta}\Delta S_{\Delta}^{-1}, \ \ R_{\Delta\alpha}=S_{\Delta}E_{\alpha}S_{\Delta}^{-1},\]
where
\[S_{\Delta}=E+\sum_{i=1}^{\infty}S_{\Delta i}\Delta^{-i}.\]
Dressing operator $S_{\Delta}$ satisfies Sato equation
\begin{equation}
\dfrac{\partial S_{\Delta}}{\d t_{\alpha, n}}=-(L_{\Delta}^n R_{\Delta\alpha})_-S_{\Delta}, \quad 1\leq\alpha\leq N,n=1,2, 3, \cdots.
\end{equation}

Considering one kind of reduction of the discrete multi-component KP hierarchy, the hierarchy will become the  discrete constrained multi-component KP hierarchy in the next subsection.

As \cite{zhangJPA}, the following reduction condition is imposed onto the discrete $N$-component KP hierarchy:
\[\label{Dconstr}\sum_{l=1}^rd_l^s(L_{\Delta}R_{\Delta l})^s+\sum_{l=r+1}^NL_{\Delta}R_{\Delta l}= \sum_{l=1}^rd_l^s(L_{\Delta}R_{\Delta l})_+^s+\sum_{l=r+1}^N(L_{\Delta}R_{\Delta l})_+,\ \ 1\leq r\leq N.\]

Let us denote following operators as \cite{zhangJPA}
\[\hat \Delta:=\d_1+\d_1+\dots+\d_r,\ \ \d_i=\frac{\d}{\d t_{i,1}},\]
\[\hat L_{\Delta}^{(i)}=S_{\Delta}^{(r)}(\hat \Delta)E_{i}^{(r)}\hat \Delta (S_{\Delta}^{(r)}(\hat \Delta))^{-1},\ \
\hat L_{\Delta}=\sum_{i=1}^rd_i\hat L_{\Delta}^{(i)},\]
and
\[B_{\Delta k}^{(l,r)}=(S_{\Delta}^{(r)}(\hat \Delta)E_l^{(r)}\hat \Delta^k (S_{\Delta}^{(r)}(\hat \Delta))^{-1})_+=(C_{\Delta k}^{(l,r)})_+,\ \ 1\leq l\leq r,\]
\[\phi^{(r)}_{\Delta}=(S_{\Delta 1})_{ij}|_{1\leq i\leq r,r+1\leq j\leq N},\ \ \psi^{(r)}_{\Delta}=(S_{\Delta 1})_{ij}|_{1\leq j\leq r,r+1\leq i\leq N},\]
where $S_{\Delta}^{(r)}$ is the $r\times r$ principal sub-matrix of $S_{\Delta}$ and  the projections here is about the operator $\hat \Delta.$
This further lead to
\[\L_{\Delta}=\hat L_{\Delta}^s=\sum_{l=1}^rd_l^sB_{\Delta s}^{(l,r)}+\phi^{(r)}\hat \Delta^{-1}\psi^{(r)}.\]
Under the constraint eq.\eqref{Dconstr}, the following evolutionary equations of the discrete constrained $N$-component KP hierarchy can be derived.

\begin{proposition}

For discrete constrained $N$-component KP hierarchy, the following hierarchies of equations can be derived \cite{zhangJPA}

\[\frac{\d \phi^{(r)}}{\d t_{n,l}}=B_{\Delta n}^{(l,r)}\phi^{(r)},\ \ \frac{\d \psi^{(r)}}{\d t_{n,l}}=-(B_{\Delta n}^{(l,r)})^*\psi^{(r)},\]
\[\frac{\d \L_{\Delta}}{\d t_{n,l}}=[B_{\Delta n}^{(l,r)},\L_{\Delta}],\]
where for $B_{\Delta n}^{(l,r)}=\sum_{i=0}^nB_i\hat \Delta^i,$
\[(B_{\Delta n}^{(l,r)})^*\psi^{(r)}:=\sum_{i=0}^n(-1)^i\hat \Delta^i\Lambda^{-i}(\psi^{(r)}B_i).\]
\end{proposition}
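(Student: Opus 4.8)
The plan is to transplant the derivation used for the continuous constrained hierarchy (the proposition proved in Section 2) into the discrete calculus of Section 4, replacing $\hat\d$ by $\hat\Delta$ throughout and keeping track of the backward shifts $\Lambda^{-1}$ that the discrete integration by parts introduces. First I would record the Lax equation $\partial_{t_{n,l}}\L_\Delta=[B_{\Delta n}^{(l,r)},\L_\Delta]$; this follows verbatim from the discrete Sato equation for $S_\Delta$ and the dressing relation $\L_\Delta=\hat L_\Delta^s$, exactly as in the continuous case, since only the associative-algebra structure of pseudo-difference operators and the projection $(\cdot)_{\pm}$ with respect to $\hat\Delta$ are used. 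The reduction \eqref{Dconstr} forces $\L_\Delta$ into the special shape $\L_\Delta=\sum_{l=1}^r d_l^s B_{\Delta s}^{(l,r)}+\phi^{(r)}\hat\Delta^{-1}\psi^{(r)}$, whose entire negative (pseudo-difference) part is the rank-one tail $\phi^{(r)}\hat\Delta^{-1}\psi^{(r)}$.

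The core of the argument is to equate the negative parts of the two sides of this Lax equation. Differentiating the tail gives $(\partial_{t_{n,l}}\L_\Delta)_-=(\partial_{t_{n,l}}\phi^{(r)})\hat\Delta^{-1}\psi^{(r)}+\phi^{(r)}\hat\Delta^{-1}(\partial_{t_{n,l}}\psi^{(r)})$. On the other side, both $B_{\Delta n}^{(l,r)}$ and $\sum_l d_l^s B_{\Delta s}^{(l,r)}$ are purely differential, so their commutator is again differential and contributes nothing to the negative part; hence $[B_{\Delta n}^{(l,r)},\L_\Delta]_-=[B_{\Delta n}^{(l,r)},\phi^{(r)}\hat\Delta^{-1}\psi^{(r)}]_-$. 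I would then establish the discrete counterpart of Lemma \ref{lemm}, namely $[B_{\Delta n}^{(l,r)},\phi^{(r)}\hat\Delta^{-1}\psi^{(r)}]_-=B_{\Delta n}^{(l,r)}(\phi^{(r)})\hat\Delta^{-1}\psi^{(r)}-\phi^{(r)}\hat\Delta^{-1}(B_{\Delta n}^{(l,r)})^*(\psi^{(r)})$, and match the coefficients standing to the left and to the right of the common kernel $\hat\Delta^{-1}$. Reading off the left coefficient yields $\partial_{t_{n,l}}\phi^{(r)}=B_{\Delta n}^{(l,r)}(\phi^{(r)})$ and reading off the right one yields $\partial_{t_{n,l}}\psi^{(r)}=-(B_{\Delta n}^{(l,r)})^*(\psi^{(r)})$.

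The step that genuinely departs from the continuous case, and which I expect to be the main obstacle, is verifying this discrete commutator lemma with precisely the shifted adjoint stated in the proposition. Mimicking the monomial reduction in the proof of Lemma \ref{lemm}, I would first treat a single power $\hat\Delta^i$ and then sum over $i$. Moving $\psi^{(r)}$ and the coefficients $B_i$ across $\hat\Delta^{-1}$ repeatedly via the discrete integration-by-parts identities \eqref{85} and \eqref{86}, together with the adjoint rule $\hat\Delta^*=-\hat\Delta\Lambda^{-1}$ and the commutativity of $\hat\Delta$ with $\Lambda$, produces $(\hat\Delta^i)^*=(-1)^i\hat\Delta^i\Lambda^{-i}$, which is exactly the backward shift appearing in $(B_{\Delta n}^{(l,r)})^*\psi^{(r)}=\sum_{i=0}^n(-1)^i\hat\Delta^i\Lambda^{-i}(\psi^{(r)}B_i)$. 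The delicate bookkeeping is to confirm that the surviving terms in the negative part are only those claimed and that the $\Lambda^{-i}$ factors attach correctly; here Proposition \ref{Dformula}, in particular its residue and projection rules for $P_{<0}(\Delta^{-1}A)$ and $res(\Delta^{-1}A)$, supplies the identities needed to discard the superfluous contributions and to pin down the shifts.
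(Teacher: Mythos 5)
You should first be aware that the paper never actually proves this proposition: both its continuous analogue in Section 2 and this discrete version are stated with a citation to \cite{zhangJPA} and no proof, so there is no in-paper argument to compare against. Judged on its own terms, your reconstruction follows the standard constrained-KP route and its technical core is consistent with the toolkit the paper builds elsewhere: the discrete commutator identity you propose to establish is exactly Lemma \ref{Dlemm} (proved in Section 5 by the same monomial induction you describe, using \eqref{85}--\eqref{86}), the adjoint formula $(\hat\Delta^i)^*=(-1)^i\hat\Delta^i\Lambda^{-i}$ is correct because $\Lambda$ and $\hat\Delta$ commute, and equating negative parts of the Lax equation and reading off the coefficients around $\hat\Delta^{-1}$ is precisely the device the paper uses for the additional flows in the proof of Theorem \ref{symmetre} (eqs.\ \eqref{tkflow2}--\eqref{tkl}).

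Two genuine gaps remain. First, your opening claim that $\partial_{t_{n,l}}\L_{\Delta}=[B_{\Delta n}^{(l,r)},\L_{\Delta}]$ follows ``verbatim'' from the discrete Sato equation is where the real content of the proposition is hidden: the Sato equation governs the full $N\times N$ dressing operator $S_{\Delta}$ in the variable $\Delta$, whereas $\L_{\Delta}$ is dressed by the $r\times r$ principal submatrix $S_{\Delta}^{(r)}(\hat\Delta)$ in the reduced variable $\hat\Delta$. That the constraint \eqref{Dconstr} makes this reduction close --- i.e.\ that $S_{\Delta}^{(r)}(\hat\Delta)$ obeys its own Sato-type equation $\partial_{t_{n,l}}S_{\Delta}^{(r)}=-(C_{\Delta n}^{(l,r)})_-S_{\Delta}^{(r)}$ and that the negative part of $\hat L_{\Delta}^s$ is the rank-one tail built from the off-diagonal blocks of $S_{\Delta 1}$ --- is exactly the non-trivial reduction of \cite{zhangJPA}; your proof assumes it rather than proves it. Second, the final matching step is not injective: writing $u=\partial_{t_{n,l}}\phi^{(r)}-B_{\Delta n}^{(l,r)}(\phi^{(r)})$ and $v=\partial_{t_{n,l}}\psi^{(r)}+(B_{\Delta n}^{(l,r)})^*(\psi^{(r)})$, you conclude $u=v=0$ from $u\hat\Delta^{-1}\psi^{(r)}+\phi^{(r)}\hat\Delta^{-1}v=0$, but this implication fails, since $u=\phi^{(r)}c$, $v=-c\,\psi^{(r)}$ solves the equation for any constant matrix $c$ of the appropriate size ($c$ commutes with $\hat\Delta^{-1}$). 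Thus your argument verifies that the stated flows are \emph{consistent} with the Lax flow, but to \emph{derive} them one must compute $\partial_{t_{n,l}}\phi^{(r)}$ and $\partial_{t_{n,l}}\psi^{(r)}$ directly from the Sato equation acting on the blocks of $S_{\Delta 1}$, which is how $\phi^{(r)}$ and $\psi^{(r)}$ are defined. Admittedly the second gap is at the same level of rigor as the paper's own use of this matching in Theorem \ref{symmetre}; the first is the substantive omission.
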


When $r=1,N=2$, one can obtain the discrete $s$-constrained KP hierarchy. When  $r=1,N\geq 2$, one can call it the discrete vector $s$-constrained KP hierarchy.
For general $r,N$, it is about the generalized discrete constrained KP hierarchy.

\section{Virasoro symmetry of the  discrete constrained multi-component KP hierarchy}
In this section, we shall construct the additional symmetry and discuss the algebraic structure of the
additional symmetry  flows of the multi-component discrete constrained KP hierarchy.

To this end, firstly we define $\hat \Gamma^{(r)}_{\Delta}$ and the Orlov-Shulman's  operator $\M_{\Delta}$ as

 \begin{equation}
\hat\Gamma^{(r)}_{\Delta}=\sum_{\beta=1}^rt_{1\beta}\frac{E_{\beta}}{sd_{\beta}^s}\hat \Delta^{1-s}+\sum_{\beta=1}^r\sum_{n=2}^{\infty}\frac nsd_{\beta}^{-s}E_{\beta}\hat \Delta^{n-s}t_{n\beta},\ \ \M_{\Delta}= S^{(r)} \hat\Gamma^{(r)}_{\Delta} (S^{(r)})^{-1}.
\end{equation}

It  is easy to find the following formula
\begin{equation}\label{Dcommute}
[\partial_{t_{k,\beta}}-\hat \Delta^kE_{\beta},\hat\Gamma^{(r)}_{\Delta}]=0.
\end{equation}
Define another operator
\begin{equation}\label{D}
\M_{\Delta}=W_{\Delta} \circ \hat\Gamma^{(r)}_{\Delta}\circ W_{\Delta}^{-1}.
\end{equation}
There are the following commutation relations
\begin{equation}
[\sum_{\beta=1}^rd_{\beta}^s\hat \Delta^s E_{\beta},\hat\Gamma^{(r)}_{\Delta}]=1,[\L_{\Delta},\M_{\Delta}]=1,
\end{equation}
which can be verified by a straightforward calculation.
By using the Sato equation, the iso-spectral flow of the $\M_{\Delta}$ operator
is given by
 \begin{equation}
\partial_{t_{k,\beta}}\M_{\Delta}=[(C_{\Delta k}^{\beta r})_+,\M_{\Delta}].
\end{equation}  More generally,
\begin{equation}
\partial_{t_{k,\beta}}(\M_{\Delta}^m\L_{\Delta}^l)=[(C_{\Delta k}^{\beta r})_+,\M_{\Delta}^m\L_{\Delta}^l].
\end{equation}

Based on the above preparation, the additional symmetry flows
of the discrete constrained multi-component KP hierarchy
are define by their actions on the dressing operator
 \begin{equation}
\partial_{l,m,\beta}S_{\Delta}=-(\M_{\Delta}^mC_{\Delta l}^{\beta r})_-\circ S_{\Delta},
\end{equation}
or equivalently  on the Lax operator
\begin{eqnarray}
\partial_{l,m,\beta}\L_{\Delta}&=&[-(\M_{\Delta}^mC_{\Delta l}^{\beta r})_-,\L_{\Delta}]
\label{Dadditionalofdkp},
\end{eqnarray} where $\partial_{l,m,\beta}$ denotes
the derivative with respect to an additional new
variable $t_{l,m,\beta}^*$. The more general actions of the additional symmetry flows
of the discrete constrained multi-component KP are given by
\begin{equation}\label{DML}
\dfrac{\partial \M_{\Delta}}{\partial_{m,n,\beta}}=-[(\M_{\Delta}^mC_{\Delta n}^{(\beta,r)})_-,\M_{\Delta}],\ \
 \partial_{l,m,\beta}\M_{\Delta}^n\L_{\Delta}^k=
 [-(\M_{\Delta}^mC_{\Delta l}^{\beta r})_-,\M_{\Delta}^n\L_{\Delta}^k].
\end{equation}

Later we can prove the additional flows $\dfrac{\partial }{\partial
t_{m,n,\beta}^*}$  commute with the flow $\dfrac{\partial
}{\partial t_{k,\gamma}}$,  but do not
commute with each other.  To this purpose, we need several lemmas and propositions as preparation firstly.

Similarly as differential case, for above local(means only contain non-negative powers on $\Delta$) difference operators $B_{\Delta n}^{\beta,r}$, we have the following lemma.
\begin{lemma}
  \label{Dlemm} $[B_{\Delta n}^{\beta,r},\phi\hat \Delta^{-1}\psi]_-=B_{\Delta n}^{\beta,r}(\phi)\hat \Delta^{-1}\psi-\phi\hat \Delta^{-1}B_{\Delta n}^{(\beta,r)*}(\psi)$.
\end{lemma}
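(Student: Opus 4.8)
The plan is to mirror the proof of Lemma \ref{lemm} in the discrete setting, replacing ordinary Leibniz differentiation with the discrete Leibniz rule and the discrete integration-by-parts identities recorded in this section. First I would use bilinearity of the commutator together with the monomial decomposition $B_{\Delta n}^{\beta,r}=\sum_{i=0}^n B_i\hat\Delta^i$, so that it suffices to analyze each $[B_i\hat\Delta^i,\phi\hat\Delta^{-1}\psi]_-$ and then sum over $i$.

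For the \emph{creation} part I would compute the negative projection of $B_i\hat\Delta^i\circ\phi\hat\Delta^{-1}\psi$. Since $\hat\Delta^{-1}\psi$ already has negative order, expanding $\hat\Delta^i\circ\phi$ by the discrete Leibniz rule and projecting shows that the only surviving term of order $\le -1$ is $B_i\,\hat\Delta^i(\phi)\,\hat\Delta^{-1}\psi$; summing over $i$ yields $B_{\Delta n}^{\beta,r}(\phi)\hat\Delta^{-1}\psi$, the first term on the right-hand side.

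The heart of the argument is the \emph{annihilation} part $\bigl(\phi\hat\Delta^{-1}\psi B_i\hat\Delta^i\bigr)_-$, where $\psi B_i$ is now a single matrix function. Here I would use the discrete commutation rule $f\hat\Delta=\hat\Delta\circ\Lambda^{-1}(f)-\Lambda^{-1}(\hat\Delta f)$ (equivalently \eqref{86} combined with $\Delta^*=-\Delta\Lambda^{-1}$) to move one factor of $\hat\Delta$ across $\hat\Delta^{-1}$: this produces a zeroth-order piece $\phi\Lambda^{-1}(\psi B_i)\hat\Delta^{i-1}$, which is of non-negative order and hence drops under the projection, plus a remainder $-\phi\hat\Delta^{-1}\Lambda^{-1}(\hat\Delta(\psi B_i))\hat\Delta^{i-1}$. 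Iterating this step $i$ times telescopes the expression into $\phi\hat\Delta^{-1}(-1)^i\hat\Delta^i\Lambda^{-i}(\psi B_i)=\phi\hat\Delta^{-1}(\hat\Delta^i)^*(\psi B_i)$. Summing over $i$ and invoking the definition $(B_{\Delta n}^{(\beta,r)})^*\psi=\sum_i(-1)^i\hat\Delta^i\Lambda^{-i}(\psi B_i)$ gives $-\phi\hat\Delta^{-1}(B_{\Delta n}^{\beta,r})^*(\psi)$, the second term.

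The main obstacle, and the only genuine difference from the continuous Lemma \ref{lemm}, is the bookkeeping of the backward shifts $\Lambda^{-1}$ generated at each application of discrete integration by parts: they must be shown to accumulate precisely into the factor $\Lambda^{-i}$ appearing in the discrete adjoint, and one must check that the signs and shifts combine into exactly $(\hat\Delta^i)^*$ as defined for the discrete hierarchy. Once this shift accounting is verified, the two parts assemble into the claimed identity.
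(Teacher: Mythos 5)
Your proof is correct and follows essentially the same route as the paper's: reduce to monomials, get the creation term $B_{\Delta n}^{\beta,r}(\phi)\hat\Delta^{-1}\psi$ from the discrete Leibniz rule, and handle the annihilation term by iterated discrete integration by parts, $f\hat\Delta=\hat\Delta\circ\Lambda^{-1}(f)-\Lambda^{-1}(\hat\Delta f)$, which telescopes into $(\hat\Delta^{i})^{*}=(-1)^{i}\hat\Delta^{i}\Lambda^{-i}$. The only difference is that you carry the matrix coefficients $B_i$ explicitly and match them against the definition $(B_{\Delta n}^{(\beta,r)})^{*}\psi=\sum_i(-1)^{i}\hat\Delta^{i}\Lambda^{-i}(\psi B_i)$, whereas the paper treats only the pure monomial $\hat\Delta^{n}$ and leaves that bookkeeping implicit --- a minor gain in completeness, not a different method.
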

\begin{proof}
 Firstly we consider a fundamental monomial: $\hat \Delta^n$ ($n\ge 1$). Then
  \begin{displaymath}
    [\hat \Delta^n,\phi\hat \Delta^{-1}\psi]_-=(\hat \Delta^n(\phi))\hat \Delta^{-1}\psi- (\phi\hat \Delta^{-1}\psi \hat \Delta^n)_-.
  \end{displaymath}
  Notice that the second term can be rewritten in the following way
  \begin{eqnarray*}
    (\phi\hat \Delta^{-1}\psi \hat \Delta^{n})_-
    =(\phi(\hat \Lambda^{-1}(\psi ))\hat \Delta^{n-1}
        -\phi\hat \Delta^{-1}(\hat \Delta\Lambda^{-1}(\psi))\hat \Delta^{n-1})_- \\
    =(\phi\hat \Delta^{-1}(-\hat \Delta\Lambda^{-1}(\psi))\hat \Delta^{n-1})_-
    =\cdots
    =\phi\hat \Delta^{-1}\left((-\hat \Delta\Lambda^{-1})^n
    (\psi)\right)=\phi\hat \Delta^{-1}(\hat \Delta^n)^*(\psi),
  \end{eqnarray*}
  then the lemma is proved.
\end{proof}

We can get some properties of the Lax operator in the following proposition.

By induction, the following important lemma can be derived
\begin{lemma}The Lax operator $\L_{\Delta}$ of discrete constrained $N$-component KP hierarchy satisfied the relation of
\begin{equation}\label{DLk}
(\L_\Delta^k)_-= \sum_{j=0}^{k-1}\L_\Delta^{k-j-1}(\phi)\hat \Delta^{-1}{(\L_\Delta^*)}^j(\psi),\ \ k\in \Z.
\end{equation}
where $\L_\Delta(\phi)=(\L_\Delta)_{+}(\phi) +  \phi\hat \Delta^{-1}(\psi\phi).$
\end{lemma}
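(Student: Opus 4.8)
The plan is to prove \eqref{DLk} by induction on $k\ge 1$, mirroring the continuous argument behind the analogous identity \eqref{Lk} but with every Leibniz expansion and integration-by-parts step replaced by its discrete counterpart, namely the identities \eqref{85}, \eqref{86} of Lemma~2.1, the product formula \eqref{Drelation}, and Lemma~\ref{Dlemm}. Throughout write $\L=\L_{\Delta}$, recall $(\L)_-=\phi\hat \Delta^{-1}\psi$, and use the dressed action $\L(f)=(\L)_+(f)+\phi\hat \Delta^{-1}(\psi f)$ together with its adjoint action $\L^{*}$, built from $(\hat \Delta^{-1})^{*}=-\Lambda\hat \Delta^{-1}$ as in \eqref{86}. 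The base case $k=1$ is immediate: the right-hand side of \eqref{DLk} reduces to the single term $\L^{0}(\phi)\hat \Delta^{-1}\psi=\phi\hat \Delta^{-1}\psi=(\L)_-$. The case $k=0$ is trivial, and, by the group property $\L^{k}\L^{-k}=E$, the negative integers are treated afterward by applying the same bookkeeping to $\L^{-1}$.

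For the inductive step I would factor $\L^{k+1}=\L\circ\L^{k}$, insert $\L^{k}=(\L^{k})_{+}+(\L^{k})_{-}$ and $\L=(\L)_{+}+\phi\hat \Delta^{-1}\psi$, and separate the negative part into three kinds of products. First, $((\L)_{+}(\L^{k})_{+})_{-}=0$, since a product of purely local operators is local. Second, a local operator $B$ against a tail obeys $(B\,\phi\hat \Delta^{-1}\psi)_{-}=B(\phi)\hat \Delta^{-1}\psi$ and $(\phi\hat \Delta^{-1}\psi\,B)_{-}=\phi\hat \Delta^{-1}B^{*}(\psi)$; both follow from Lemma~\ref{Dlemm} and the monomial computation carried out in its proof. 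Third, a product of two tails is resolved by \eqref{Drelation}. Applying the inductive hypothesis for $(\L^{k})_{-}$ to each factor then expresses $(\L^{k+1})_{-}$ as an explicit sum indexed by $j=0,\dots,k-1$, together with the leftover term $\phi\hat \Delta^{-1}((\L^{k})_{+})^{*}(\psi)$ arising from $(\phi\hat \Delta^{-1}\psi\,(\L^{k})_{+})_{-}$.

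The heart of the argument is to recognise two telescopings in this sum. On the eigenfunction side, the local contribution $(\L)_{+}(\L^{k-1-j}(\phi))$ combines with the first term $\phi\hat \Delta^{-1}(\psi\,\L^{k-1-j}(\phi))$ produced by \eqref{Drelation} to reconstruct exactly the dressed action $\L(\L^{k-1-j}(\phi))=\L^{k-j}(\phi)$; summing over $j$ yields the terms $\sum_{j=0}^{k-1}\L^{k-j}(\phi)\hat \Delta^{-1}(\L^{*})^{j}(\psi)$ of the target. On the adjoint side, using $((\L^{k})_{+})^{*}=(\L^{*})^{k}-((\L^{k})_{-})^{*}$ one rewrites the leftover term, and the correction $-((\L^{k})_{-})^{*}(\psi)$ cancels precisely against the residual $\Lambda$-shifted pieces generated by the second term of \eqref{Drelation}. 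What survives is the single missing summand $\phi\hat \Delta^{-1}(\L^{*})^{k}(\psi)$, i.e.\ the $j=k$ term, completing \eqref{DLk} for $k+1$.

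I expect the adjoint bookkeeping to be the main obstacle: verifying that the $\Lambda$-shifted remainders from \eqref{Drelation} assemble, after repeated use of \eqref{85}--\eqref{86}, into exactly $((\L^{k})_{-})^{*}(\psi)$ requires careful control of the discrete adjoint $(\hat \Delta^{-1})^{*}=-\Lambda\hat \Delta^{-1}$ and of the noncommutativity of the matrix coefficients $\phi,\psi$, whose $N\times N$ ordering must be preserved at every step. A secondary, lighter point is the extension to $k\in\Z_{<0}$: differentiating $\L^{k}\L^{-k}=E$, or equivalently running the same induction on powers of $\L^{-1}$ whose negative part has the mirror-image structure, propagates the formula to negative $k$ with the roles of $\phi,\psi$ and of $\L,\L^{*}$ interchanged.
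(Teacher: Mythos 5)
Your proposal is correct and takes essentially the same route as the paper, which offers no written proof beyond the phrase ``by induction'' (together with the $k=3$ illustration \eqref{DL3}): your base case $k=1$, the splitting of $(\L_\Delta\circ\L_\Delta^k)_-$ into local-times-local (vanishing), local-times-tail (Lemma~\ref{Dlemm}) and tail-times-tail (eq.~\eqref{Drelation}) pieces, the telescoping $(\L_\Delta)_+(f)+\phi\hat\Delta^{-1}(\psi f)=\L_\Delta(f)$, and the final cancellation of the $\Lambda$-shifted remainders through $((\L_\Delta^k)_+)^*=(\L_\Delta^*)^k-((\L_\Delta^k)_-)^*$ are precisely the details the paper omits, and they check out. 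The only caveat concerns the stated range $k\in\Z$: for $k\le 0$ the sum in \eqref{DLk} is empty while $(\L_\Delta^k)_-\neq 0$ when $k<0$, so the identity as written holds only for $k\ge 0$; this is the paper's imprecision rather than yours, and your closing remark that negative powers obey a mirror-image formula (with $\phi,\psi$ and $\L_\Delta,\L_\Delta^*$ interchanged) is the honest resolution of it.
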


Specifically, for example the case when $k=3$,
\begin{equation}\label{DL3}
(\L_\Delta^3)_-=[\L_\Delta^2(\phi)\hat \Delta^{-1}\psi+ \L_\Delta(\phi)\hat \Delta^{-1}\L_\Delta^*(\psi)+ \phi\hat \Delta^{-1}\L_\Delta^{2*}(\psi)].
\end{equation}
According to the definition  of the discrete constrained multi-component KP hierarchy,
the action of original
additional flows of the discrete constrained $N$-component KP hierarchy  is  expressed by
\begin{equation}\label{Dadditional}
(\partial_{1,k,\beta} \L_\Delta)_-=[(\M_{\Delta} C_{\Delta k}^{(\beta,r)})_+,\L_\Delta]_-+(C_{\Delta k}^{(\beta,r)})_-.
\end{equation}
Also  if $
X=M\hat \Delta^{-1}N,$ then
\begin{equation}\label{DoperatorXL}
[X,\L_\Delta]_-=[M\hat \Delta^{-1}\L_\Delta^*(N)-\L_\Delta(M)\hat \Delta^{-1}N]+ [X(\phi)\hat \Delta^{-1}\psi-\phi\hat \Delta^{-1}X^*(\psi)].
\end{equation}

To keep the consistency with flow equations on wave function $\phi,\psi$, we shall introduce an operator
$T_{\Delta k}^{(\alpha,r)}$  to modify the additional symmetry
of the discrete constrained multi-component KP hierarchy.
We now introduce a pseudo discrete operator  $T_{\Delta k}^{(\alpha,r)}$,
\begin{eqnarray}
T_{\Delta k}^{(\alpha,r)}&=&0,\ \ k=-1,0,1,\label{Dy12}\\
T_{\Delta k}^{(\alpha,r)}&=&\sum_{j=0}^{k-1}[j-\frac{1}{2}(k-1)]C_{\Delta k-1-j}^{(\alpha,r)}(\phi)
\hat \Delta^{-1}(C_{\Delta j}^{(\alpha,r)*})(\psi),k\geq 2,\label{Dy}
\end{eqnarray}
and have the following property.
\begin{lemma}
The action of flows $\partial_{t_{l,\beta}}$ of the discrete constrained $N$-component KP hierarchy  on the
$T_{\Delta k}^{(\alpha,r)}$ is
\begin{equation}\label{Dykderivat}
\partial_{t_{l,\beta}}
T_{\Delta k}^{(\alpha,r)}=[B_{\Delta l}^{(\beta,r)},T_{\Delta k}^{(\alpha,r)}]_-.
\end{equation}
\end{lemma}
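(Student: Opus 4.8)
The plan is to mirror, step for step, the proof of the continuous analogue (the lemma yielding eq.~\eqref{ykderivat}), replacing $\hat\d$ by $\hat\Delta$, the eigenfunction/adjoint pair $P^{(r)},Q^{(r)}$ by $\phi,\psi$, and $C_k^{(\alpha,r)}$ by $C_{\Delta k}^{(\alpha,r)}$. First I would apply $\partial_{t_{l,\beta}}$ directly to the defining sum \eqref{Dy}. Since the formal inverse $\hat\Delta^{-1}$ carries no explicit dependence on the flow variables $t_{l,\beta}$, the Leibniz rule lets me distribute the derivative across the two nontrivial factors $C_{\Delta k-1-j}^{(\alpha,r)}(\phi)$ and $(C_{\Delta j}^{(\alpha,r)*})(\psi)$, producing two families of terms under the weighted sum $\sum_{j=0}^{k-1}[j-\tfrac12(k-1)]$.

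The next step is to insert the time evolution of those two factors. The function $C_{\Delta k-1-j}^{(\alpha,r)}(\phi)$ evolves as an eigenfunction, $\partial_{t_{l,\beta}}\bigl(C_{\Delta k-1-j}^{(\alpha,r)}(\phi)\bigr)=B_{\Delta l}^{(\beta,r)}\bigl(C_{\Delta k-1-j}^{(\alpha,r)}(\phi)\bigr)$, while $(C_{\Delta j}^{(\alpha,r)*})(\psi)$ evolves as an adjoint eigenfunction, $\partial_{t_{l,\beta}}\bigl((C_{\Delta j}^{(\alpha,r)*})(\psi)\bigr)=-(B_{\Delta l}^{(\beta,r)})^{*}\bigl((C_{\Delta j}^{(\alpha,r)*})(\psi)\bigr)$; both statements are consequences of the dressing/Sato structure together with the evolution equations of the discrete constrained hierarchy. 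Substituting these in gives, for each $j$, the pair $B_{\Delta l}^{(\beta,r)}\bigl(C_{\Delta k-1-j}^{(\alpha,r)}(\phi)\bigr)\hat\Delta^{-1}(C_{\Delta j}^{(\alpha,r)*})(\psi)-C_{\Delta k-1-j}^{(\alpha,r)}(\phi)\hat\Delta^{-1}(B_{\Delta l}^{(\beta,r)})^{*}\bigl((C_{\Delta j}^{(\alpha,r)*})(\psi)\bigr)$.

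Finally I would recognise this expression as the negative part of a commutator. By Lemma~\ref{Dlemm}, for any monomial $\phi'\hat\Delta^{-1}\psi'$ one has $[B_{\Delta l}^{(\beta,r)},\phi'\hat\Delta^{-1}\psi']_-=B_{\Delta l}^{(\beta,r)}(\phi')\hat\Delta^{-1}\psi'-\phi'\hat\Delta^{-1}(B_{\Delta l}^{(\beta,r)})^{*}(\psi')$, which is precisely the per-$j$ term just obtained with $\phi'=C_{\Delta k-1-j}^{(\alpha,r)}(\phi)$ and $\psi'=(C_{\Delta j}^{(\alpha,r)*})(\psi)$. Summing over $j$ against the weights $[j-\tfrac12(k-1)]$ and using linearity of the bracket then collapses the whole expression into $[B_{\Delta l}^{(\beta,r)},T_{\Delta k}^{(\alpha,r)}]_-$, which is the claim \eqref{Dykderivat}.

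I expect the only genuine obstacle to be the verification of the two evolution equations for $C_{\Delta k-1-j}^{(\alpha,r)}(\phi)$ and $(C_{\Delta j}^{(\alpha,r)*})(\psi)$ in the discrete setting, where the deformed Leibniz rule and the discrete conjugation rules $\Delta^{*}=-\Delta\Lambda^{-1}$ and $(\Delta^{-1})^{*}=-\Lambda\Delta^{-1}$ must be tracked carefully so that the shift operator $\Lambda$ lands in the correct place. Once these are in hand the algebra is formally identical to the continuous case and Lemma~\ref{Dlemm} does the final bookkeeping.
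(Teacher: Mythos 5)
Your proposal is correct and takes essentially the same approach as the paper: the paper's own proof of eq.~\eqref{Dykderivat} is just the remark that one repeats the continuous computation for $T_k^{(\alpha,r)}$ verbatim, using Lemma~\ref{Dlemm} in place of Lemma~\ref{lemm}, which is exactly the step-by-step mirroring (Leibniz rule, eigenfunction/adjoint-eigenfunction evolution, then the commutator identity) that you spell out. The two evolution equations you flag as the ``only genuine obstacle'' are indeed the standard consequences of the Sato/dressing structure already used implicitly in the continuous proof, so no new ingredient is required.
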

\begin{proof}
The proof is similar as continuous constrained multi-component KP hierarchy instead of using the  Lemma \ref{Dlemm}.
\end{proof}

Further, the following expression of $[T_{\Delta k-1}^{(\beta,r)},L]_-$ is also
necessary to define the additional flows of the discrete constrained $N$-component KP hierarchy.
\begin{lemma}
The Lax operator $\L_\Delta$ of discrete constrained $N$-component KP hierarchy and $T_{\Delta k-1}^{(\beta,r)}$ has the following relation,
\begin{eqnarray}\notag
[T_{\Delta k-1}^{(\beta,r)},\L_\Delta]_-&=&-(C_{\Delta k}^{(\beta,r)})_-+\frac{k}{2}[\phi_\Delta^{(r)}\hat \Delta^{-1}(C_{\Delta k-1}^{(\beta,r)})^*(\psi_\Delta^{(r)})+C_{\Delta k-1}^{(\beta,r)}(\phi_\Delta^{(r)})\hat \Delta^{-1}\psi_\Delta^{(r)}\\
\label{Dykl}
&&+T_{\Delta k-1}^{(\beta,r)}(\phi_\Delta^{(r)})\hat \Delta^{-1}\psi_\Delta^{(r)}-\phi_\Delta^{(r)}\hat \Delta^{-1}(T_{\Delta k-1}^{(\beta,r)})^*(\psi_\Delta^{(r)})].
\end{eqnarray}
\end{lemma}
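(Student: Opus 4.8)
The plan is to follow verbatim the structure of the continuous proof of \eqref{ykl}, replacing ordinary calculus by the discrete calculus of Section~4 and using the discrete analogues of the auxiliary identities. First I would substitute the defining expression \eqref{Dy} for $T_{\Delta k-1}^{(\beta,r)}$, namely
\begin{equation*}
T_{\Delta k-1}^{(\beta,r)}=\sum_{j=0}^{k-2}\Big[j-\tfrac{1}{2}(k-2)\Big]C_{\Delta k-2-j}^{(\beta,r)}(\phi)\,\hat \Delta^{-1}\,(C_{\Delta j}^{(\beta,r)})^*(\psi),
\end{equation*}
and apply the commutator formula \eqref{DoperatorXL} to each summand with $M=C_{\Delta k-2-j}^{(\beta,r)}(\phi)$ and $N=(C_{\Delta j}^{(\beta,r)})^*(\psi)$. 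This splits $[T_{\Delta k-1}^{(\beta,r)},\L_\Delta]_-$ into a homogeneous part coming from $M\hat \Delta^{-1}\L_\Delta^*(N)-\L_\Delta(M)\hat \Delta^{-1}N$ summed over $j$, and an inhomogeneous part which, by linearity of $X\mapsto X(\phi)$ and $X\mapsto X^*(\psi)$, collapses to $T_{\Delta k-1}^{(\beta,r)}(\phi)\hat \Delta^{-1}\psi-\phi\hat \Delta^{-1}(T_{\Delta k-1}^{(\beta,r)})^*(\psi)$, i.e.\ exactly the $T_{\Delta k-1}^{(\beta,r)}$ terms of \eqref{Dykl}.

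For the homogeneous part I would use the index-raising relations $\L_\Delta(C_{\Delta m}^{(\beta,r)}(\phi))=C_{\Delta m+1}^{(\beta,r)}(\phi)$ and $\L_\Delta^*((C_{\Delta j}^{(\beta,r)})^*(\psi))=(C_{\Delta j+1}^{(\beta,r)})^*(\psi)$, which convert the two contributions into $\sum_j[j-\tfrac12(k-2)]C_{\Delta k-2-j}^{(\beta,r)}(\phi)\hat \Delta^{-1}(C_{\Delta j+1}^{(\beta,r)})^*(\psi)$ and $-\sum_j[j-\tfrac12(k-2)]C_{\Delta k-1-j}^{(\beta,r)}(\phi)\hat \Delta^{-1}(C_{\Delta j}^{(\beta,r)})^*(\psi)$. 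Shifting $j\mapsto j-1$ in the first sum makes both sums carry the common operator $C_{\Delta k-1-j}^{(\beta,r)}(\phi)\hat \Delta^{-1}(C_{\Delta j}^{(\beta,r)})^*(\psi)$ for $1\le j\le k-2$, with coefficients differing by exactly $-1$; the overlap telescopes to $-\sum_{j=1}^{k-2}C_{\Delta k-1-j}^{(\beta,r)}(\phi)\hat \Delta^{-1}(C_{\Delta j}^{(\beta,r)})^*(\psi)$, while the two endpoints $j=0$ and $j=k-1$ survive with the common coefficient $\tfrac{k}{2}-1$, giving $(\tfrac{k}{2}-1)[\phi\hat \Delta^{-1}(C_{\Delta k-1}^{(\beta,r)})^*(\psi)+C_{\Delta k-1}^{(\beta,r)}(\phi)\hat \Delta^{-1}\psi]$ after using that $C_{\Delta 0}^{(\beta,r)}$ acts as the identity on the dressed functions.

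Finally I would invoke the discrete residue identity \eqref{DLk} in its $\beta$-component form, $(C_{\Delta k}^{(\beta,r)})_-=\sum_{j=0}^{k-1}C_{\Delta k-1-j}^{(\beta,r)}(\phi)\hat \Delta^{-1}(C_{\Delta j}^{(\beta,r)})^*(\psi)$, to rewrite the telescoped sum as $-(C_{\Delta k}^{(\beta,r)})_-$ plus the two missing endpoint terms $C_{\Delta k-1}^{(\beta,r)}(\phi)\hat \Delta^{-1}\psi$ and $\phi\hat \Delta^{-1}(C_{\Delta k-1}^{(\beta,r)})^*(\psi)$. Absorbing these endpoints into the surviving boundary terms promotes their coefficient from $\tfrac{k}{2}-1$ to $\tfrac{k}{2}$, and collecting the homogeneous and inhomogeneous contributions reproduces \eqref{Dykl}.

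The hard part will be justifying the two index-raising relations and the component form of \eqref{DLk} in the discrete setting, where $\Delta^*=-\Delta\Lambda^{-1}$ forces every adjoint to carry backward shifts $\Lambda^{-1}$. One must verify — exactly as in the telescoping computation inside the proof of Lemma~\ref{Dlemm} — that these shifts are absorbed consistently into the discrete adjoint $(C_{\Delta j}^{(\beta,r)})^*$, so that raising the power of $\hat \Delta$ inside $C_{\Delta j}^{(\beta,r)}$ corresponds precisely to the action of $\L_\Delta$ (respectively $\L_\Delta^*$) on the dressed eigenfunctions, with no residual shift discrepancy. Once this bookkeeping is in place, the algebraic manipulation is identical to the continuous case and \eqref{Dykl} follows.
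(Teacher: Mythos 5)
Your proposal is correct and coincides with the paper's own proof: the paper disposes of this lemma by saying it follows from ``a similar direct calculation as the continuous case,'' and the calculation you describe --- substituting \eqref{Dy}, applying \eqref{DoperatorXL} summand-wise with the index-raising relations, shifting the summation index so the overlap telescopes and the endpoints carry coefficient $\tfrac{k}{2}-1$, then invoking the component form of \eqref{DLk} to produce $-(C_{\Delta k}^{(\beta,r)})_-$ and promote that coefficient to $\tfrac{k}{2}$ --- is exactly the continuous computation of \eqref{ykl} transcribed into the discrete calculus. The $\Lambda^{-1}$-bookkeeping you flag as the remaining hard part is the only genuinely discrete ingredient, and the paper handles it implicitly through Lemma \ref{Dlemm} just as you propose.
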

\begin{proof}
A similar direct calculation as the continuous case can  help us deriving eq.\eqref{Dykl}.
\end{proof}

We define the
additional flows of the discrete constrained $N$-component KP hierarchy  as
 \begin{equation}\label{Dtkflow}
\partial_{t_{1,k,\beta}^*}\L_\Delta=[-(\M_\Delta C_ {\Delta k}^{(\beta,r)})_-+T_{\Delta k-1}^{\beta,r},\L_\Delta],
\end{equation}
where
$T_{\Delta k-1}^{\beta,r}=0$, for $l=0,1,2$, such that the right-hand side of
(\ref{Dtkflow}) is in the form of derivation of Lax equation. Generally, one can also derive
\begin{equation}\label{DMLK}
\partial_{t_{1,k,\beta}^*}(\M_{\Delta}\L_\Delta^l)=[-(\M_\Delta C_ {\Delta k}^{(\beta,r)})_-+T_{\Delta k-1}^{\beta,r},\M_{\Delta}\L_\Delta^l].
\end{equation}

Now we calculate the action of the additional flows eq.\eqref{Dtkflow}
on the eigenfunction $\phi$ and $\psi$ of the discrete constrained $N$-component KP hierarchy.
\begin{theorem}\label{Dsymmetre}
The acting of additional flows of discrete constrained $N$-component KP hierarchy on the eigenfunction $\phi$ and $\psi$ are
\begin{equation}\label{DBAfunction}
\begin{split}
{\partial_{t_{1,k,\beta}^*}\phi}&=(\M_{\Delta}
C_{\Delta k}^{(\beta,r)})_+(\phi)+T_{\Delta k-1}^{(\beta,r)}(\phi)+\frac{k}{2} C_{\Delta k-1}^{(\beta,r)}(\phi),\\
 {\partial_{t_{1,k,\beta}^*}\psi}&=-(\M_{\Delta}
C_{\Delta k}^{(\beta,r)})^*_+(\psi)-(T_{\Delta k-1}^{(\beta,r)})^*(\psi)+\frac{k}{2}{ C_{\Delta k-1}^{(\beta,r)*}}(\psi).
\end{split}
\end{equation}

\end{theorem}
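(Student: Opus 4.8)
The plan is to follow the strategy that produced the continuous Theorem~\ref{symmetre}: compute the negative (in $\hat\Delta$) part of the additional flow $\partial_{t_{1,k,\beta}^*}\L_\Delta$ in two independent ways and then compare. Because the constrained Lax operator has the rank-one tail form $\L_\Delta=(\L_\Delta)_++\phi\hat\Delta^{-1}\psi$, its flow's negative part will, in both computations, be a sum of pieces of the shape $(\cdot)\hat\Delta^{-1}\psi$ and $\phi\hat\Delta^{-1}(\cdot)$; matching the two produces the two equations in \eqref{DBAfunction}.

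First I would take the negative part of the defining relation \eqref{Dtkflow}. Writing $\partial_{t_{1,k,\beta}^*}\L_\Delta=[-(\M_\Delta C_{\Delta k}^{(\beta,r)})_-+T_{\Delta k-1}^{(\beta,r)},\L_\Delta]$ and projecting, the term $[-(\M_\Delta C_{\Delta k}^{(\beta,r)})_-,\L_\Delta]_-$ coincides with the original additional flow $(\partial_{1,k,\beta}\L_\Delta)_-$ and is therefore rewritten through \eqref{Dadditional} as $[(\M_\Delta C_{\Delta k}^{(\beta,r)})_+,\L_\Delta]_-+(C_{\Delta k}^{(\beta,r)})_-$; the operator identity \eqref{DoperatorXL} then converts $[(\M_\Delta C_{\Delta k}^{(\beta,r)})_+,\L_\Delta]_-$ into terms $(\M_\Delta C_{\Delta k}^{(\beta,r)})_+(\phi)\hat\Delta^{-1}\psi$ and $\phi\hat\Delta^{-1}(\M_\Delta C_{\Delta k}^{(\beta,r)})_+^*(\psi)$. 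The remaining commutator $[T_{\Delta k-1}^{(\beta,r)},\L_\Delta]_-$ is supplied verbatim by the relation \eqref{Dykl}, whose leading $-(C_{\Delta k}^{(\beta,r)})_-$ cancels the $(C_{\Delta k}^{(\beta,r)})_-$ produced above and whose remaining $\tfrac{k}{2}$-weighted tail contributes the $C_{\Delta k-1}^{(\beta,r)}$ and $T_{\Delta k-1}^{(\beta,r)}$ terms. Collecting everything yields the discrete analogue of \eqref{tkflow2}: an explicit expression for $(\partial_{t_{1,k,\beta}^*}\L_\Delta)_-$ entirely in the $(\cdot)\hat\Delta^{-1}\psi$ and $\phi\hat\Delta^{-1}(\cdot)$ form.

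On the other side I would differentiate the constraint $\L_\Delta=(\L_\Delta)_++\phi\hat\Delta^{-1}\psi$ directly. Since the positive part is killed by the projection, this gives at once $(\partial_{t_{1,k,\beta}^*}\L_\Delta)_-=(\partial_{t_{1,k,\beta}^*}\phi)\hat\Delta^{-1}\psi+\phi\hat\Delta^{-1}(\partial_{t_{1,k,\beta}^*}\psi)$, the discrete counterpart of \eqref{tkl}. Equating this with the expression from the previous paragraph, then matching the terms of the form $(\cdot)\hat\Delta^{-1}\psi$ to identify $\partial_{t_{1,k,\beta}^*}\phi$ and the terms of the form $\phi\hat\Delta^{-1}(\cdot)$ to identify $\partial_{t_{1,k,\beta}^*}\psi$, reproduces exactly the first and second lines of \eqref{DBAfunction}.

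The hard part will be the discrete bookkeeping rather than the logic. Unlike the continuous case, moving functions across $\hat\Delta^{-1}$ and forming adjoints carries backward shifts, since $\Delta^*=-\Delta\Lambda^{-1}$ and the identities \eqref{85}, \eqref{86}, \eqref{Drelation} all involve $\Lambda^{-1}$. I would therefore be careful to invoke the already-established discrete tools — Lemma~\ref{Dlemm}, the operator identity \eqref{DoperatorXL}, and the expansion \eqref{DLk} — rather than their continuous prototypes, so that the $\Lambda$-shifts are absorbed automatically inside those lemmas. A second point needing care is the uniqueness of the decomposition $(\cdot)\hat\Delta^{-1}\psi+\phi\hat\Delta^{-1}(\cdot)$: the comparison presupposes that an operator in this tail form determines its left coefficient (acting on $\phi$) and right coefficient (acting on $\psi$) unambiguously, and it is precisely this that legitimately separates the $\phi$-equation from the $\psi$-equation. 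Once these two technical points are granted, the remaining computation is term-by-term identical to the proof of Theorem~\ref{symmetre}.
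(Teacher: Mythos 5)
Your proposal is correct and is essentially the paper's own approach: the paper actually states Theorem \ref{Dsymmetre} without any written proof, implicitly appealing to the proof of the continuous Theorem \ref{symmetre}, and your two-sided computation of $(\partial_{t_{1,k,\beta}^*}\L_\Delta)_-$ --- via \eqref{Dadditional} and \eqref{Dykl} with the cancellation of $(C_{\Delta k}^{(\beta,r)})_-$ on one side, and the tail-form expression $(\partial_{t_{1,k,\beta}^*}\phi)\hat\Delta^{-1}\psi+\phi\hat\Delta^{-1}(\partial_{t_{1,k,\beta}^*}\psi)$ on the other --- is precisely the discrete transcription of that proof (eqs.\ \eqref{tkflow2} and \eqref{tkl}). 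One small citation slip: the step converting $[(\M_\Delta C_{\Delta k}^{(\beta,r)})_+,\L_\Delta]_-$ into $(\M_\Delta C_{\Delta k}^{(\beta,r)})_+(\phi)\hat\Delta^{-1}\psi-\phi\hat\Delta^{-1}(\M_\Delta C_{\Delta k}^{(\beta,r)})_+^*(\psi)$ rests on Lemma \ref{Dlemm} (whose proof works for any local difference operator), not on \eqref{DoperatorXL}, which applies only to operators of the tail form $M\hat\Delta^{-1}N$.
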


Next we shall prove the commutation relation between the additional
flows $\partial_{t_{1,k,\beta}^*}$ of discrete constrained $N$-component KP hierarchy and the original flows $\partial_{t_{l,\alpha}}$ of
it.
\begin{theorem}
The additional flows of $\partial_{t_{1,k,\beta}^*}$ are  symmetry flows of the discrete constrained $N$-component KP hierarchy, i.e. they commute with all $\partial_{t_{l,\beta}}$ flows.
\end{theorem}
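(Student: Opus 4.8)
The plan is to imitate the proof of the continuous analogue essentially verbatim, working at the level of the dressing operator $S_\Delta$. Concretely, I would compute the commutator $[\partial_{t_{1,k,\beta}^*},\partial_{t_{l,\alpha}}]S_\Delta$ and show that it vanishes identically. The three ingredients I rely on are: (i) the Sato flow of the original hierarchy, $\partial_{t_{l,\alpha}}S_\Delta=-(C_{\Delta l}^{(\alpha,r)})_-S_\Delta$; (ii) the defining action of the modified additional flow on the dressing operator, $\partial_{t_{1,k,\beta}^*}S_\Delta=[-(\M_\Delta C_{\Delta k}^{(\beta,r)})_-+T_{\Delta k-1}^{(\beta,r)}]S_\Delta$, which is the operator producing the Lax flow eq.\eqref{Dtkflow}; and (iii) the discrete transformation rule eq.\eqref{Dykderivat}, $\partial_{t_{l,\beta}}T_{\Delta k}^{(\alpha,r)}=[B_{\Delta l}^{(\beta,r)},T_{\Delta k}^{(\alpha,r)}]_-$.

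Applying the two derivations in both orders and using the Leibniz rule on the products $(C_{\Delta l}^{(\alpha,r)})_-S_\Delta$ and $[-(\M_\Delta C_{\Delta k}^{(\beta,r)})_-+T_{\Delta k-1}^{(\beta,r)}]S_\Delta$, I would first isolate the purely $\M_\Delta$-dependent contribution. That part is nothing but the standard Orlov--Shulman additional symmetry $-(\M_\Delta C_{\Delta k}^{(\beta,r)})_-$ of the underlying discrete multi-component hierarchy; by the iso-spectral evolution $\partial_{t_{k,\beta}}\M_\Delta=[(C_{\Delta k}^{\beta r})_+,\M_\Delta]$ together with the commutativity of the $B_{\Delta\cdot}$-flows, these terms recombine into commutators that cancel in pairs exactly as in the unconstrained case, contributing zero to the bracket. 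What survives is only the $T_{\Delta k-1}^{(\beta,r)}$-dependent piece, which collapses to
\begin{displaymath}
[\partial_{t_{1,k,\beta}^*},\partial_{t_{l,\alpha}}]S_\Delta=\big([B_{\Delta l}^{(\alpha,r)},T_{\Delta k-1}^{(\beta,r)}]_- - \partial_{t_{l,\alpha}}T_{\Delta k-1}^{(\beta,r)}\big)S_\Delta.
\end{displaymath}
By the discrete lemma eq.\eqref{Dykderivat} the two terms inside the parenthesis are equal, so the right-hand side is zero and the additional flows $\partial_{t_{1,k,\beta}^*}$ commute with every original flow $\partial_{t_{l,\alpha}}$, which is the assertion.

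The main obstacle is the bookkeeping of the $(\,\cdot\,)_-$ projections when the two derivations are interchanged: one must check that the cross terms $\partial_{t_{1,k,\beta}^*}(C_{\Delta l}^{(\alpha,r)})_-$ and $\partial_{t_{l,\alpha}}(\M_\Delta C_{\Delta k}^{(\beta,r)})_-$ recombine into genuine commutators so that the $\M_\Delta$-contribution cancels cleanly, leaving precisely the single $T_{\Delta k-1}^{(\beta,r)}$ difference above. All the discrete peculiarities---the shift operator $\Lambda$ hidden in $\Delta^*=-\Delta\Lambda^{-1}$ and in the adjoint convention for $(B_{\Delta n}^{(l,r)})^*\psi^{(r)}$---have already been absorbed into the previously established discrete lemmas, in particular Lemma \ref{Dlemm} and eq.\eqref{Dykderivat}, so no new analytic difficulty arises beyond the continuous setting and the argument becomes essentially formal once those lemmas are in hand.
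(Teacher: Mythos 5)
Your proposal is correct and follows essentially the same route as the paper: both compute $[\partial_{t_{1,k,\beta}^*},\partial_{t_{l,\alpha}}]S_\Delta$ at the level of the dressing operator, cancel the $\M_\Delta$-dependent terms by the standard Orlov--Shulman argument as in the unconstrained case, and reduce the commutator to $\bigl([B_{\Delta l}^{(\alpha,r)},T_{\Delta k-1}^{(\beta,r)}]_- - \partial_{t_{l,\alpha}}T_{\Delta k-1}^{(\beta,r)}\bigr)S_\Delta$, which vanishes by eq.\eqref{Dykderivat}. This matches the paper's (very compressed) proof, which itself mirrors the detailed continuous computation given for Theorem 3.3, so no further comparison is needed.
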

\begin{proof}
According the action of  $\partial_{t_{1,k,\beta}^*}$ and $\partial_{t_{l,\alpha}}$ on the
dressing operator $S_{\Delta}$, then
\begin{eqnarray*}
[\partial_{t_{1,k,\beta}^*},\partial_{t_{l,\alpha}}]S_{\Delta} &=&[( C_{\Delta l}^{(\beta,r)})_+,-T_{\Delta k-1}^{(\beta,r)}]_-S_{\Delta}-(\partial_{t_{l,\alpha}} T_{\Delta k-1}^{(\beta,r)})S_{\Delta}\\
&=&0.
\end{eqnarray*}
Therefore the theorem holds.
\end{proof}

Then  using  the
relation
${{\partial_{t_{1,l,\beta}^*}}(\L_\Delta^k(\phi))}=({\partial_{t_{1,l,\beta}^*}}(\L_\Delta^k))(\phi)+\L_\Delta^k
{\partial_{t_{1,l,\beta}^*}}(\phi)$,
we can find
the additional  flows $\partial_{t_{1,l,\beta}^*}$ of discrete constrained $N$-component KP hierarchy have the following relations
\begin{eqnarray}\label{Dlqstar}
\begin{split}
{\partial_{t_{1,l,\beta}^*} \L_\Delta^k(\phi)}&=(\M_{\Delta}
C_{\Delta l}^{\beta,r})_+(\L_\Delta^k(\phi))+(k+\frac{l}{2})\L_\Delta^{k+l-1}(\phi),\ \ l=0,1,2,\\
{\partial_{t_{1,l,\beta}^*} {(\L_\Delta^*)}^k(\psi)}&=-(\M_{\Delta}
C_{\Delta l}^{\beta,r})^*_+{(\L_\Delta^*)}^k(\psi)+(k+\frac{l}{2}){(\L_\Delta^*)}^{k+l-1}(\psi),\ \ l=0,1,2.
\end{split}
\end{eqnarray}

Moreover, the action of
 $\partial_{t_{1,l,\beta}^*}$ on $T_{\Delta k}^{(\alpha,r)}$ is given by the following lemma.
\begin{lemma}
The actions on $T_{\Delta k}^{(\alpha,r)}$ of the additional  symmetry flows
$\partial_{t_{1,l,\beta}^*}$ of the discrete constrained $N$-component KP hierarchy are
\begin{eqnarray}\label{DPL}
\partial_{t_{1,l,\beta}^*} T_{\Delta k}^{(\alpha,r)}=[(\M_{\Delta} C_{\Delta l}^{\beta,r})_++T_{\Delta l-1}^{(\beta,r)},T_{\Delta k}^{(\alpha,r)}]_-+(k-l+1)T_{\Delta k+l-1}^{(\alpha,r)}\delta_{\alpha,\beta}.
\end{eqnarray}
\end{lemma}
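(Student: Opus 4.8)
The plan is to reproduce the proof of the continuous identity \eqref{PL} line by line, transcribing every object into its discrete counterpart ($\hat\d\mapsto\hat\Delta$, $C_k^{(\alpha,r)}\mapsto C_{\Delta k}^{(\alpha,r)}$, $P^{(r)}\mapsto\phi$, $Q^{(r)}\mapsto\psi$, $\M\mapsto\M_{\Delta}$) and respecting the discrete conjugation rule $\Delta^*=-\Delta\Lambda^{-1}$ throughout. First I would start from the definition \eqref{Dy} of $T_{\Delta k}^{(\alpha,r)}$ and apply the additional flow $\partial_{t_{1,l,\beta}^*}$ to it term by term, using the Leibniz rule on each summand $C_{\Delta k-1-j}^{(\alpha,r)}(\phi)\,\hat\Delta^{-1}(C_{\Delta j}^{(\alpha,r)*})(\psi)$. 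This splits the derivative into two families of contributions: one coming from the action on the eigenfunction factor $C_{\Delta k-1-j}^{(\alpha,r)}(\phi)$, and one from the action on the adjoint factor $(C_{\Delta j}^{(\alpha,r)*})(\psi)$.

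Next I would substitute into these two families the discrete analogue of the $C$-form of \eqref{lqstar} (the eigenfunction version underlying \eqref{Dlqstar}), so that $\partial_{t_{1,l,\beta}^*}C_{\Delta k-1-j}^{(\alpha,r)}(\phi)$ produces a piece $[(\M_{\Delta}C_{\Delta l}^{\beta,r})_++T_{\Delta l-1}^{(\beta,r)}]\,C_{\Delta k-1-j}^{(\alpha,r)}(\phi)$ together with a shift term $(k-j-1+\tfrac{l}{2})\,C_{\Delta k+l-2-j}^{(\alpha,r)}(\phi)\delta_{\alpha\beta}$, and symmetrically for the adjoint factor with coefficient $(j+\tfrac{l}{2})$ and a shift $(C_{\Delta j+l-1}^{(\alpha,r)})^*(\psi)\delta_{\alpha\beta}$. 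The pieces carrying $(\M_{\Delta}C_{\Delta l}^{\beta,r})_+$ from the left and its adjoint $(\M_{\Delta}C_{\Delta l}^{\beta,r})_+^*$ from the right reassemble, via Lemma \ref{Dlemm} and \eqref{DoperatorXL} applied with $X=T_{\Delta k}^{(\alpha,r)}$, into the single negative projection $[(\M_{\Delta}C_{\Delta l}^{\beta,r})_+,T_{\Delta k}^{(\alpha,r)}]_-$; the $T_{\Delta l-1}^{(\beta,r)}$ pieces assemble likewise into $[T_{\Delta l-1}^{(\beta,r)},T_{\Delta k}^{(\alpha,r)}]_-$. This yields the commutator on the right-hand side of \eqref{DPL}.

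The remaining step, which I expect to be the main obstacle, is showing that the two families of $\delta_{\alpha\beta}$ shift terms collapse precisely to $(k-l+1)\,T_{\Delta k+l-1}^{(\alpha,r)}\delta_{\alpha\beta}$. Writing the target term $T_{\Delta k+l-1}^{(\alpha,r)}=\sum_{m}[m-\tfrac12(k+l-2)]\,C_{\Delta k+l-2-m}^{(\alpha,r)}(\phi)\hat\Delta^{-1}(C_{\Delta m}^{(\alpha,r)})^*(\psi)$ and reindexing the first family by $m=j$ and the second family by $m=j+l-1$, the coefficient of the basis monomial labelled by $m$ becomes $[m-\tfrac12(k-1)](k-m-1+\tfrac{l}{2})+[m-l+1-\tfrac12(k-1)](m+1-\tfrac{l}{2})$, which one checks equals $(k-l+1)[m-\tfrac12(k+l-2)]$ identically, with the range mismatch at the two ends harmless because the missing summand carries a vanishing factor $(k-m)$ or $m$. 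The delicate part here is bookkeeping the backward shifts $\Lambda^{-1}$ hidden in the discrete adjoints when re-expressing $(C_{\Delta j+l-1}^{(\alpha,r)})^*$ so that the reindexing stays compatible with the discrete integration-by-parts of Lemma \ref{Dlemm}; once that is verified, the combinatorial identity above is elementary and the proof concludes exactly as in the continuous case.
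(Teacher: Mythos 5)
Your proposal is correct and takes essentially the same route as the paper: the paper's own proof of this lemma is literally ``using the same technique as the continuous case,'' i.e.\ applying $\partial_{t_{1,l,\beta}^*}$ termwise to the definition \eqref{Dy}, inserting the discrete analogue of \eqref{lqstar}, reassembling the local pieces into the commutators $[(\M_{\Delta}C_{\Delta l}^{\beta,r})_+ + T_{\Delta l-1}^{(\beta,r)}, T_{\Delta k}^{(\alpha,r)}]_-$, and collapsing the $\delta_{\alpha\beta}$ shift terms. Your explicit check of the coefficient identity $[m-\tfrac12(k-1)](k-m-1+\tfrac{l}{2})+[m-l+1-\tfrac12(k-1)](m+1-\tfrac{l}{2})=(k-l+1)[m-\tfrac12(k+l-2)]$ and of the vanishing boundary summands is actually more detailed than what the paper records.
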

\begin{proof} Using the same technique as continuous case, a straightforward calculation implies eq.\eqref{DPL}.
\end{proof}

Now it is time to identity the algebraic structure of the
additional symmetry flows of the discrete constrained $N$-component KP hierarchy.
\begin{theorem}\label{Dalg}
The additional flows $\partial_{t_{1,k,\beta}^*}$ of the discrete constrained $N$-component KP hierarchy form the
positive half of Virasoro algebra, i.e., for $l=0,1,2; k\geq 0,$
\begin{equation}
[\partial_{t_{1,l,\alpha}^*},\partial_{t_{1,k,\beta}^*}]=(k-l)\delta_{\alpha,\beta}\partial_{t_{1,k+l-1,\alpha}^*}.
\end{equation}
\end{theorem}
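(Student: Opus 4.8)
The plan is to reproduce, in the discrete setting, the computation that established the continuous Theorem~\ref{alg}; the argument is formally line-for-line identical because each analytic ingredient has already been supplied with a discrete counterpart above. Abbreviating the additional-flow generator as $A_{l,\alpha}:=-(\M_\Delta C_{\Delta l}^{(\alpha,r)})_-+T_{\Delta l-1}^{(\alpha,r)}$, so that \eqref{Dtkflow} reads $\partial_{t_{1,l,\alpha}^*}\L_\Delta=[A_{l,\alpha},\L_\Delta]$, I would evaluate $[\partial_{t_{1,l,\alpha}^*},\partial_{t_{1,k,\beta}^*}]\L_\Delta$ by the Leibniz rule for a bracket of two derivations, expanding it into the double commutators $[A_{l,\alpha},[A_{k,\beta},\L_\Delta]]-[A_{k,\beta},[A_{l,\alpha},\L_\Delta]]$ together with the terms $[(\partial_{t_{1,l,\alpha}^*}A_{k,\beta}),\L_\Delta]-[(\partial_{t_{1,k,\beta}^*}A_{l,\alpha}),\L_\Delta]$ coming from the time-dependence of the generators themselves. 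Since $l=0,1,2$ forces $T_{\Delta l-1}^{(\alpha,r)}=0$ by \eqref{Dy12}, the generator on that side reduces to $A_{l,\alpha}=-(\M_\Delta C_{\Delta l}^{(\alpha,r)})_-$, which is exactly the simplification that makes the bracket close.

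The three action formulas already in hand provide what is needed to evaluate $\partial_{t_{1,l,\alpha}^*}A_{k,\beta}$. The flow on the Orlov--Shulman operator \eqref{DML} controls the variation of $(\M_\Delta C_{\Delta k}^{(\beta,r)})_-$, the eigenfunction relations \eqref{Dlqstar} handle the dressed data inside $C_{\Delta k}^{(\beta,r)}$, and most importantly the correction operator obeys, by the $k\!\mapsto\!k-1$, $\alpha\!\leftrightarrow\!\beta$ specialisation of \eqref{DPL},
\begin{equation*}
\partial_{t_{1,l,\alpha}^*} T_{\Delta k-1}^{(\beta,r)}=[(\M_{\Delta} C_{\Delta l}^{\alpha,r})_++T_{\Delta l-1}^{(\alpha,r)},\,T_{\Delta k-1}^{(\beta,r)}]_-+(k-l)\,T_{\Delta k+l-2}^{(\beta,r)}\delta_{\alpha,\beta}.
\end{equation*}
Substituting these, I would invoke the Jacobi identity on the purely Orlov--Shulman part: the nested double commutators combine with the brackets $[[(\M_\Delta C_{\Delta l}^{(\alpha,r)})_-,\M_\Delta C_{\Delta k}^{\beta,r}]_-,\L_\Delta]$ and its $l\!\leftrightarrow\!k$ partner, and using $C_{\Delta l}^{(\alpha,r)}C_{\Delta k}^{(\beta,r)}=\delta_{\alpha,\beta}C_{\Delta l+k}^{(\alpha,r)}$ together with the canonical relation $[\L_\Delta,\M_\Delta]=1$, these collapse to $(k-l)\delta_{\alpha,\beta}[-(\M_\Delta C_{\Delta k+l-1}^{(\alpha,r)})_-,\L_\Delta]$.

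The remaining task is to check that the $T_\Delta$-contributions assemble to the matching piece. The $[\,\cdot\,]_-$ bracket appearing in the displayed formula for $\partial_{t_{1,l,\alpha}^*}T_{\Delta k-1}^{(\beta,r)}$ cancels pairwise, again by the Jacobi identity, against the terms generated by $[T_{\Delta k-1}^{\beta,r},[A_{l,\alpha},\L_\Delta]]$, leaving only the anomalous $(k-l)T_{\Delta k+l-2}^{(\alpha,r)}\delta_{\alpha,\beta}$. Adding the two survivors reconstructs $(k-l)\delta_{\alpha,\beta}[-(\M_\Delta C_{\Delta k+l-1}^{(\alpha,r)})_-+T_{\Delta k+l-2}^{(\alpha,r)},\L_\Delta]=(k-l)\delta_{\alpha,\beta}\,\partial_{t_{1,k+l-1,\alpha}^*}\L_\Delta$, which is precisely the asserted relation; passing from $\L_\Delta$ to the dressing operator $S_\Delta$ (or to $\M_\Delta\L_\Delta^l$ via \eqref{DMLK}) then upgrades it to an operator identity among the flows.

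I expect the principal obstacle to be bookkeeping rather than anything conceptual. Because the discrete adjoint carries a backward shift, $\Delta^*=-\Delta\Lambda^{-1}$ and $(\Delta^{-1})^*=-\Lambda\Delta^{-1}$, one must confirm that the telescoping cancellations underlying the discrete relations \eqref{DLk} and \eqref{Dykl} still deliver the same clean coefficients $(k-l)$ and $\tfrac{k}{2}$ with no residual shift factors that would spoil the diagonal $\delta_{\alpha,\beta}$ structure. Since Lemma~\ref{Dlemm} and the relations \eqref{Dlqstar} and \eqref{DPL} were already proved in exactly the same form as their continuous counterparts, these shifts have been absorbed uniformly into the definition of the conjugate operation, so the manipulation proceeds identically to the proof of Theorem~\ref{alg}; only the careful tracking of the projection $(\cdot)_-$ and of the anomalous $T_\Delta$-term requires genuine attention.
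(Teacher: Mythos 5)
Your proposal is correct and follows essentially the same route as the paper's own proof: the paper likewise expands $[\partial_{t_{1,l,\alpha}^*},\partial_{t_{1,k,\beta}^*}]\L_\Delta$ using the vanishing of $T_{\Delta l-1}^{(\alpha,r)}$ for $l=0,1,2$, the action formula \eqref{DPL} applied to $T_{\Delta k-1}^{(\beta,r)}$, the identity $[(\M_{\Delta}C_{\Delta l}^{\alpha,r})_-,T_{\Delta k-1}^{(\beta,r)}]_-=[(\M_{\Delta}C_{\Delta l}^{\alpha,r})_-,T_{\Delta k-1}^{(\beta,r)}]$, and the Jacobi identity, so that the Orlov--Shulman and $T_\Delta$ contributions collapse to $(k-l)\delta_{\alpha,\beta}\bigl[-(\M_\Delta C_{\Delta k+l-1}^{(\alpha,r)})_-+T_{\Delta k+l-2}^{(\alpha,r)},\L_\Delta\bigr]=(k-l)\delta_{\alpha,\beta}\partial_{t_{1,k+l-1,\alpha}^*}\L_\Delta$, exactly as you outline.
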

\begin{proof}
Using $[(\M_{\Delta}C_{\Delta l}^{\alpha,r})_-,T_{\Delta k-1}^{(\beta,r)}]_-=[(\M_{\Delta}C_{\Delta l}^{\alpha,r})_-,T_{\Delta k-1}^{(\beta,r)}]$
and the Jacobi identity, one can derive the following computation
\begin{eqnarray*}
&&[\partial_{t_{1,l,\alpha}^*},\partial_{t_{1,k,\beta}^*}]\L_\Delta\\
&=&\partial_{t_{1,l,\alpha}^*}([-(\M_\Delta C_{\Delta k}^{\beta,r})_-,\L_\Delta]+[T_{\Delta k-1}^{(\beta,r)},\L_\Delta])-\partial_{t_{1,k,\beta}^*}([-(\M_\Delta C_{\Delta l}^{\alpha,r})_-,\L_\Delta]) \\
&=&\partial_{t_{1,l,\alpha}^*} [-(\M_\Delta C_ {\Delta k}^{(\beta,r)})_-,\L_\Delta] +[\partial_{t_{1,l,\alpha}^*} T_{\Delta k-1}^{(\beta,r)},\L_\Delta]+[T_{\Delta k-1}^{(\beta,r)},\partial_{t_{1,l,\alpha}^*} \L_\Delta]+[\partial_{t_{1,k,\beta}^*}(\M_\Delta C_{\Delta l}^{\alpha,r})_-,\L_\Delta]\\
&&+[(\M_\Delta C_{\Delta l}^{\alpha,r})_-,\partial_{t_{1,k,\beta}^*} \L_\Delta]\\
&=&[-(\partial_{t_{1,l,\alpha}^*} (\M_\Delta C_ {\Delta k}^{(\beta,r)}))_-,\L_\Delta] +[-(\M_\Delta C_{\Delta k}^{\beta,r})_-,(\partial_{t_{1,l,\alpha}^*} \L_\Delta)] +[[(\M_{\Delta}C_{\Delta l}^{\alpha,r}R_{\Delta\beta})_+,T_{\Delta k-1}^{(\beta,r)}]_-\\
&&+(k-l)T_{k+l-2}^{(\beta,r)}\delta_{\alpha,\beta},\L_\Delta]+[T_{\Delta k-1}^{(\beta,r)},[-(\M_{\Delta}C_{\Delta l}^{\alpha,r})_-,\L_\Delta]]
+[[-(\M_\Delta C_ {\Delta k}^{(\beta,r)})_-\\
&&+T_{\Delta k-1}^{(\beta,r)},\M_{\Delta}C_{\Delta l}^{\alpha,r}]_-,\L_\Delta]+[(\M_\Delta C_{\Delta l}^{\alpha,r})_-,[-(\M_\Delta C_ {\Delta k}^{(\beta,r)})_-+T_{\Delta k-1}^{(\beta,r)},\L_\Delta]] \\
&=&(k-l)[-(\M_{\Delta}\L_\Delta^{k+l-1})_-,\L_\Delta]+(k-l)[\delta_{\alpha,\beta}T_{k+l-2},\L_\Delta]\\
&&-[[(\M_{\Delta}C_{\Delta l}^{\alpha,r})_-,T_{\Delta k-1}^{(\beta,r)}],\L_\Delta]
+[T_{\Delta k-1}^{(\beta,r)},[-(\M_{\Delta}C_{\Delta l}^{\alpha,r})_-,\L_\Delta]]
+[(\M_\Delta C_{\Delta l}^{\alpha,r})_-,[T_{\Delta k-1}^{(\beta,r)},\L_\Delta]]\\
&=&(k-l)\delta_{\alpha,\beta}\partial_{t_{1,k+l-1,\alpha}^*}\L_\Delta.
\end{eqnarray*}

\end{proof}
This shows  that
the discretization from constrained KP hierarchy to discrete constrained $N$-component KP hierarchy
keeps the important virasoro algebraic structure.

\section{Conclusions and Discussions}\label{Dconclusion}

 In this paper, the additional symmetry flows in eq.(\ref{tkflow}) for the constrained $N$-component KP hierarchy have been constructed. In this process, with the help of
 differential  operator $T_{k}^{(\alpha,r)}$ , the additional flows (\ref{tkflow}) on eigenfunctions and adjoint eigenfunctions
 of   constrained $N$-component KP hierarchy were obtained in Theorem
\ref{symmetre}. In Theorem \ref{alg}, these flows have been shown to
provide a hidden algebraic structure, i.e., the Virasoro
algebra. Also we construct the additional symmetries and prove a lot of similar results of the discrete
constrained multi-component KP  hierarchy and give the Virasoro flow equations on eigenfunctions and adjoint eigenfunctions.  This tells us  that
the discretization from constrained KP hierarchy to discrete constrained $N$-component KP hierarchy
retain the important virasoro algebraic
structure.

{\bf Acknowledgments} {\noindent \small  Chuanzhong Li  is supported by the National Natural Science Foundation of China under Grant No. 11201251, 11571192, the Zhejiang Provincial Natural Science Foundation under Grant No. LY15A010004, LY12A01007, the Natural Science Foundation of Ningbo under Grant No. 2015A610157. Jingsong He is supported by the National Natural Science Foundation of China under Grant No. 11271210, K. C. Wong Magna Fund in
Ningbo University. }


\vskip20pt

\end{document}